\newtheorem{prop}{Proposition}
\newtheorem{cor}{Corollary}
\newtheorem{lm}{Lemma}
\newtheorem{thm}{Theorem}
\newcommand{\bal}{\begin{align}}
\newcommand{\eal}{\end{align}}
\newcommand{\be}{\begin{eqnarray}}
\newcommand{\ee}{\end{eqnarray}}
\newcommand{\benn}{\begin{eqnarray*}}
\newcommand{\eenn}{\end{eqnarray*}}
\def\IR{\rm I \kern-0.20em R}
\newcommand{\utwi}[1]{\mbox{\boldmath $ #1$}}
\newcommand{\bthm}{\begin{thm}}
\newcommand{\ethm}{\end{thm}}
\newcommand{\bcor}{\begin{cor}}
\newcommand{\ecor}{\end{cor}}
\newcommand{\bprop}{\begin{prop}}
\newcommand{\eprop}{\end{prop}}
\newcommand{\blm}{\begin{lm}}
\newcommand{\elm}{\end{lm}}
\newcommand{\beq}{\begin{equation}}
\newcommand{\eeq}{\end{equation}}
\newcommand{\ber}{\begin{eqnarray}}
\newcommand{\eer}{\end{eqnarray}}
\newcommand{\bproof}{\begin{proof}}
\newcommand{\eproof}{\end{proof}}
\newcommand{\diag}{\mathop{\mbox{\rm diag}}}
\newcommand{\bit}{\begin{itemize}}
\newcommand{\eit}{\end{itemize}}
\newcommand{\ben}{\begin{enumerate}}
\newcommand{\een}{\end{enumerate}}
\newcommand{\bdesc}{\begin{description}}
\newcommand{\edesc}{\end{description}}
\newcommand{\beqarrn}{\begin{eqnarray*}}
\newcommand{\eeqarrn}{\end{eqnarray*}}
\newcommand{\bproofof}{\begin{proofof}}
\newcommand{\eproofof}{\end{proofof}}
\newenvironment{rem}{\begin{trivlist}\item[]{\bf
Remark:}\hspace{4mm}}{\end{trivlist}}
\newcommand{\brem}{\begin{rem}}
\newcommand{\erem}{\end{rem}}
\newenvironment{rems}{\begin{trivlist}\item[]{\bf
Remarks}\begin{itemize}}{\end{itemize}\end{trivlist}}
\newcommand{\brems}{\begin{rems}}
\newcommand{\erems}{\end{rems}}
\newtheorem{fact}{Fact}
\newcommand{\bfact}{\begin{fact}}
\newcommand{\efact}{\end{fact}}
\newtheorem{examp}{Example}
\newcommand{\bexamp}{\begin{examp}\rm}
\newcommand{\eexamp}{\end{examp}}
\newtheorem{defn}{Definition}
\newcommand{\bdefn}{\begin{defn}\rm}
\newcommand{\edefn}{\end{defn}}
\newtheorem{alg}{Algorithm}
\newcommand{\balg}{\begin{alg}}
\newcommand{\ealg}{\end{alg}}
\newtheorem{prob}{Problem}
\newcommand{\bprob}{\begin{prob}}
\newcommand{\eprob}{\end{prob}}
\newcommand{\bvtm}{\begin{verbatim}}
\newcommand{\bfig}{\begin{figure}}
\newcommand{\efig}{\end{figure}}
\newcommand{\bcen}{\begin{center}}
\newcommand{\ecen}{\end{center}}
\long\def\comment#1{}
\def \n2{{N_0 \over 2}}
\def \h5{\hspace{0.5in}}
\newcommand{\ba}{{\utwi{a}}}
\newcommand{\bb}{{\utwi{b}}}
\newcommand{\bc}{{\utwi{c}}}
\newcommand{\bff}{{\utwi{f}}}
\newcommand{\br}{{\utwi{r}}}
\newcommand{\bu}{{\utwi{u}}}
\newcommand{\bv}{{\utwi{v}}}
\newcommand{\bw}{{\utwi{w}}}
\newcommand{\bx}{{\utwi{x}}}
\newcommand{\by}{{\utwi{y}}}
\newcommand{\bA}{{\utwi{A}}}
\newcommand{\bB}{{\utwi{B}}}
\newcommand{\bE}{{\utwi{E}}}
\newcommand{\bG}{{\utwi{G}}}
\newcommand{\bH}{{\utwi{H}}}
\newcommand{\bI}{{\utwi{I}}}
\newcommand{\bM}{{\utwi{M}}}
\newcommand{\bN}{{\utwi{N}}}
\newcommand{\bP}{{\utwi{P}}}
\newcommand{\bQ}{{\utwi{Q}}}
\newcommand{\bR}{{\utwi{R}}}
\newcommand{\bS}{{\utwi{S}}}
\newcommand{\bX}{{\utwi{X}}}
\tikzstyle{block}=[draw opacity=0.7,line width=1.4cm]
\newtheorem{lemma}{Lemma}
\newtheorem{remark}{Remark}
\newtheorem{proposition}{Proposition}
\title{Wireless MIMO Switching with Zero-forcing Relaying and Network-coded Relaying}
\author{Fanggang~Wang, \emph{Member, IEEE}, Soung Chang~Liew, \emph{Fellow, IEEE}, and Dongning~Guo, \emph{Senior Member, IEEE}
\thanks{Manuscript received August 13, 2011; revised December 21, 2011; accepted May 5, 2012. This work was partially supported by grants from the Univ. Grants Committee of the Hong Kong, China (AoE/E-02/08; 414911); the State Key Lab of Rail Traffic Control and Safety (RCS2011ZT011); the Fundamental Research Funds for
the Central Universities (2011JBM203); Program for Changjiang
Scholars and Innovative Research Team in Univ. (IRT0949); the Joint
Funds of State Key Program of NSFC (60830001).

F. Wang is with the State Key Lab of Rail Traffic Control and Safety, School of Electronic and Information Engineering, Beijing Jiaotong University, Beijing, China, and Institute of Network Coding, The Chinese University of Hong Kong, HK SAR, China (e-mail: fgwang@inc.cuhk.edu.hk).

S. Liew is with the Department of Information Engineering, the Chinese University of Hong Kong, HK SAR, China (e-mail:   soung@ie.cuhk.edu.hk).

D. Guo is with the Department of Electrical Engineering \& Computer Science,
Northwestern University, Evanston, IL, USA (e-mail: dGuo@northwestern.edu).
}}
\begin{document}
\IEEEoverridecommandlockouts

\maketitle
\thispagestyle{empty}

%\begin{abstract} \label{abs}
%\end{abstract}
%
%
%% Note that keywords are not normally used for peerreview papers.
%\begin{IEEEkeywords}
%MIMO switching, relay, derangement, fairness, physical-layer network coding.
%\end{IEEEkeywords}

\begin{abstract} \label{abs}
%\symbolfootnote[0]{This work was partially supported by a grant from
%the University Grants Committee of the Hong Kong Special
%Administrative Region, China (Project No. AoE/E-02/08).}
A wireless relay with multiple antennas is called a multiple-input-multiple-output (MIMO) switch if it maps its input links to its output links using ``precode-and-forward.'' Namely, the MIMO switch precodes the received signal vector in the uplink using some matrix for transmission in the downlink. This paper studies the scenario of $K$ stations and a MIMO switch, which has full channel state information. The precoder at the MIMO switch is either a zero-forcing matrix or a network-coded matrix. With the zero-forcing precoder, each destination station receives only its desired signal with enhanced noise but no interference. With the network-coded precoder, each station receives not only its desired signal and noise, but possibly also self-interference, which can be perfectly canceled.
Precoder design for optimizing the received signal-to-noise ratios at the destinations is investigated. For zero-forcing relaying, the problem is solved in closed form in the two-user case, whereas in the case of more users, efficient algorithms are proposed and shown to be close to what can be achieved by extensive random search. For network-coded relaying, we present efficient iterative algorithms that can boost the throughput further.
\end{abstract}

% Note that keywords are not normally used for peerreview papers.
\begin{IEEEkeywords}
Beamforming, MIMO switching,  network coding,  relay, zero-forcing.
\end{IEEEkeywords}

% For peer review papers, you can put extra information on the cover
% page as needed:
% \ifCLASSOPTIONpeerreview
% \begin{center} \bfseries EDICS Category: 3-BBND \end{center}
% \fi
%
% For peerreview papers, this IEEEtran command inserts a page break and
% creates the second title. It will be ignored for other modes.
%\IEEEpeerreviewmaketitle

\section{Introduction}
Relays in wireless networks can extend coverage as well as improve
energy efficiency \cite{den09}. In this paper, we study a setup in
which multiple single-antenna stations communicate with each other via a
multi-antenna relay. In each uplink slot, the stations simultaneously transmit, then in the subsequent downlink slot, the relay precodes its received signal by a certain
matrix before broadcasting to the stations. In the absence of noise,
the multiple-input multiple-output (MIMO) system between the
transmitters and the receivers can be viewed as a product of the
downlink channel matrix, the precoder and the uplink channel matrix.
In this work, we  design a zero-forcing
precoder so that the product channel is a desired permutation matrix, which forms a one-to-one mapping (or links) from the transmitters to the receivers.  Hence the technique is called MIMO switching.  We also study a generalization referred to as network-coded precoding where the off-diagonal elements of the channel matrix form a permutation, and where the diagonal elements can be nonzero.  Such nonzero diagonal elements cause self-interference, which can be fully canceled assuming the interference gains are available at the receivers.  We study how to design the precoder to maximize the signal-to-noise ratios (SNRs) of the links.

%Network coding offers a new paradigm for networking and communications \cite{Ahl00} \cite{Li03} \cite{Yeung06}. The first attempt of applying network coding in wireless communication is in two-way relay channel. In \cite{Wu04informationexchange}, it was shown that a straightforward application of network coding can improve throughput by 33\% over the traditional store-and-forward  scheme. Ref. \cite{Zhang06physical-layernetwork} proposed physical-layer network coding (PNC) that exploits superposition of electromagnetic waves , boosting the throughput improvement to 100\%. In \cite{Katti07embracingwireless}, a simple implementation of PNC called analog network coding (ANC) was proposed.

To the best of our knowledge, this work is the first to treat unicast non-pairwise switching patterns. Prior work that investigate data exchange via a relay includes \cite{den09, guo08, dguo12jsac, Cui08, Gao09, moh09}. References \cite{guo08, Cui08, Gao09} investigate the case of ``full data exchange,'' in which all stations want to broadcast their data to all the other stations. A slotted system with a single-antenna relay is considered in \cite{guo08} and the maximum throughput region is evaluated. Data transmissions in \cite{Cui08, Gao09} can be summarized as follows: In the first slot, all stations transmit to the relay simultaneously; subsequent slots are devoted to downlink transmissions; in each downlink slot, the relay multiplies the signal received in the first time slot by a different precoder, such that at the end of all downlink slots, all stations receive the broadcast data from all the other stations. By contrast, the framework investigated in this paper focuses on the unicast case, in which station $i$ transmits to another station $j$ only. (Station $j$ may transmit to a different station than $i$.) Any general transmission pattern (unicast, multicast, broadcast, or a mixture of them) among the stations can be realized by scheduling a set of different unicast transmissions, as has been pointed out by the authors in preliminary work \cite{allerton11}. A single-antenna relay with different forwarding strategies is considered in \cite{den09}, which studies both full data exchange and ``pairwise data exchange,'' in which stations form pairs to exchange data with each other only. It is a special case of unicast switching studied here. Reference \cite{moh09} studies pairwise data exchange only, where the relay adopts the decode-and-forward strategy. The diversity-multiplexing tradeoff under reciprocal and non-reciprocal channels is also analyzed.

In this paper, we consider both pairwise and non-pairwise switching, in which a multi-antenna relay works in precode-and-forward manner. We first study switching traffic among the stations using a zero-forcing MIMO relay, where each destination receives the desired signal with enhanced noise. We then study a more general network-coded relaying, which exploits physical-layer network coding for performance improvement \cite{Ahl00,Li03,Yeung06,Zhang06physical-layernetwork}. For fairness, we study how to design the precoder to maximize the minimum received SNR among all stations, which is referred to as the {\em maxmin problem}. Since the maxmin problem is NP-hard, we use a semidefinite relaxation technique to compute an approximate solution. The problem further simplifies if the SNRs at all destinations are required to be identical, and we call it the {\em equal-SNR problem}. We derive conditions under which the maxmin and equal-SNR problems are equivalent. By evaluating the throughput performances of the two problems, we show the gap between them is small especially in the high SNR regime. That is, our numerical results suggest that we can use the equal-SNR problem to approximate the (NP-hard) maxmin problem. Furthermore, we show that network-coded relaying can noticeably improve the throughput performance over zero-forcing relaying.

The remainder of the paper is organized as follows: Section II introduces the scheme of wireless MIMO switching. In Section III (resp.\ Section IV), the maxmin (resp.\ equal-SNR) problem is investigated for both zero-forcing and network-coded relaying. Section V presents the simulation results. Section VI concludes this paper.

\tikzstyle{place}=[circle,draw=blue!50,fill=blue!20,thick,
inner sep=0pt,minimum size=6mm]
\tikzstyle{transition}=[rectangle,draw=black!100,fill=none,semithick,
inner sep=0pt,minimum size=6mm]
\tikzstyle{mimoswitch}=[rectangle,draw=black!100,fill=none,semithick,
inner sep=0pt,minimum width=12mm, minimum height=26mm]
\tikzstyle{relay}=[rectangle,draw=black!100,fill=none,semithick,
inner sep=0pt,minimum width=25mm, minimum height=10mm]

\def\antenna{%
    -- +(0mm,4.0mm) -- +(2.625mm,7.5mm) -- +(-2.625mm,7.5mm) -- +(0mm,4.0mm)
}

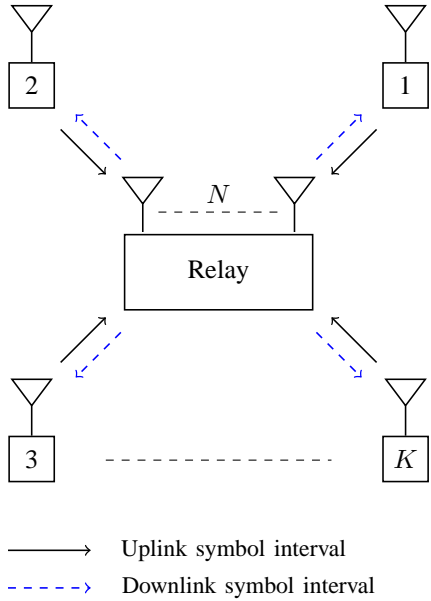
\begin{figure}
\centering
\begin{tikzpicture}[node distance=10em]
\node (relay) [relay] {Relay};
\node (1) [above right of=relay] [transition] {1};
\node (2) [above left of=relay] [transition] {2};
\node (3) [below left of=relay] [transition] {3};
\node (N) [below right of=relay] [transition] {$K$};

\draw[color=black,semithick] (1.north) \antenna;
\draw[color=black,semithick] (2.north) \antenna;
\draw[color=black,semithick] (3.north) \antenna;
\draw[color=black,semithick] (N.north) \antenna;
\path (relay.north west) to node (a) [pos=.2,inner sep=0] {} (relay.north) to node (b) [pos=.8,inner sep=0] {} (relay.north east);
\draw[color=black,semithick] (a) \antenna  (b) \antenna;

\draw [dashed] (-.8,.8) to node [above]  {$N$} (.8,.8);
\draw [dashed] (-1.5,-2.5) to node [auto] {} (1.5,-2.5);
\draw [color=blue, dashed, semithick, ->] (1.3,1.5) -- (1.9,2.1); \draw [semithick, ->] (2.1,1.9) -- (1.5,1.3);
\draw [color=blue, dashed, semithick, ->] (-1.3,1.5) -- (-1.9,2.1); \draw [semithick, ->] (-2.1,1.9) -- (-1.5,1.3);
\draw [color=blue, dashed, semithick, ->] (1.3,-0.8) -- (1.9,-1.4); \draw [semithick, ->] (2.1,-1.2) -- (1.5,-0.6);
\draw [color=blue, dashed, semithick, ->] (-1.3,-0.8) -- (-1.9,-1.4); \draw [semithick, ->] (-2.1,-1.2) -- (-1.5,-0.6);

\draw [color=blue, dashed, semithick, ->] (-2.8,-4.2) -- (-1.7,-4.2); \draw [semithick, ->] (-2.8,-3.7) -- (-1.7,-3.7);
\node at (0.2,-3.7) {\small{Uplink symbol interval}};
\node at (0.4,-4.2) {\small{Downlink symbol interval}};
\end{tikzpicture}
\caption{Wireless MIMO switching.}
\label{diagram}
\end{figure}

\section{System Description} \label{sec.SystemDes}

%\subsection{System Model}
Consider $K$ stations, numbered $1, \dots, K$, each with one antenna, as shown in Fig.~\ref{diagram}. There is no direct link between any two stations and the stations communicate via a relay with $N$ antennas.
The precode-and-forward scheme applies under the condition of $K\leq N$, where the relay has enough degrees of freedom to switch all data streams at the same time.
We assume $K=N$ throughout for simplicity. In the case of $K < N$, all the matrix inverses in the paper shall be replaced by Moore-Penrose pseudo inverses \cite{matrix}. Each transmission consists of one uplink symbol interval and one downlink symbol interval of equal duration. In particular, the two symbol intervals are two slots in a time-division system. The uplink symbol interval is for simultaneous uplink transmissions from the stations to the relay; the downlink symbol interval is for downlink transmissions from the relay to the stations. Each round of uplink and downlink transmission realizes a switching permutation, as shall be described shortly.

Consider one transmission. Let $\bx = [x_1, \cdots, x_N]^T$ be the vector representing the signals transmitted by the stations. Let $\by=[y_1, \cdots, y_N]^T$ be the received signals at the relay, and $\bu=[u_1, \cdots, u_N]^T$  be the noise vector with independent identically distributed (i.i.d.) noise samples following circularly-symmetric complex Gaussian (CSCG) distribution, i.e., $u_n \sim \mathcal{N}_c(0, \gamma^2)$. Then
\begin{equation} \label{formula_uplink}
\by=\bH \bx +\bu,
\end{equation}
where $\bH$  is the uplink channel gain matrix.
The relay multiplies $\by$  by a precoding matrix $\bG$ before relaying the signals. In this paper, we assume that the uplink channel and downlink channel are reciprocal, so that the downlink channel is $\bH^T$. Thus, the received signals at the stations in vector form are
\bal \label{formula_receive}
\br =& \bH^T \bG\by + \bw\\
 =& \bH^T \bG\bH \bx + \bH^T \bG\bu  + \bw,
\end{align}
where $\bw$ is the noise vector at the receiver, with the i.i.d. noise samples following CSCG distribution, i.e., $w_n \sim \mathcal{N}_c(0, \sigma^2)$.

In the following, we describe two precoding schemes.

\subsection{Zero-forcing Relaying}
We refer to an $N\times N$ matrix $\bP$ that has one and only one nonzero element on each row and each column, which is equal to $1$, as a {\em permutation} matrix. Evidently, $\bP\bx$ is a column vector consisting of the same elements as $\bx$ but permuted in a certain order depending on $\bP$.  For example, if
\[
\bP =
\begin{bmatrix}
0 & 0 & 1\\
1 & 0 & 0\\
0 & 1 & 0\\
\end{bmatrix},
\]
then $\bP[x_1, x_2, x_3]^T = [x_3, x_1, x_2]^T$. In the case where all diagonal elements of $\bP$ are zero it is also called a {\em derangement}.

Suppose that the purpose of $\bG$ is to realize a particular permutation represented by the permutation matrix $\bP$, and to amplify the signals coming from the stations. That is,
\begin{equation} \label{4}
\bH^T\bG\bH=\bA\bP,
\end{equation}
where $\bA=\diag\{a_1,\cdots,a_N\}$ is an ``amplification'' diagonal matrix. Each diagonal element is regarded as the gain of a link. Accordingly, the precoder can be calculated as
\be \label{10}
\bG=\bH^{-T}\bA\bP\bH^{-1}.
\ee
Let the receivers compensate for the amplification to yield received signals expressed collectively as:
\be
\hat{\br} = \bA^{-1} \br = \bP \bx + \bv,
\ee
where the {\em post-processing} noise is expressed as
\begin{align}
\bv = \bP \bH^{-1}\bu + \bA^{-1}\bw.
\end{align}
Let us define
\be \label{dd}
\bQ \triangleq \bI + \gamma^2\bP\bH^{-1}\bH^{-H}\bP^T.
\ee
The covariance of the post-processing noise $\bv$ is written as
\bal  \label{8}
\mathbb{E}\{\bv\bv^H\}=&\gamma^2 \bP\bH^{-1}\bH^{-H}\bP^T + \sigma^2 \bA^{-1}\bA^{-H}\\
=&\bQ -\bI + \sigma^2 \bA^{-1}\bA^{-H}.
\end{align}
%\be  \label{8}
%\begin{aligned}
%\mathbb{E}\{\bv\bv^H\}&=\gamma^2 \bP\bH^{-1}\bH^{-H}\bP^T + \sigma^2 \bA^{-1}\bA^{-H}\\
%&=\bQ -\bI + \sigma^2 \bA^{-1}\bA^{-H}.
%\end{aligned}
%\ee

Suppose all uplink transmissions are independent and use unit average power, i.e., $\mathbb{E}\{x_i^2\} = 1,\ i=1,\cdots,N$.
The problem is to design the precoder $\bG$ to minimize the variance of the post-processing noise subject to a power constraint for the relay:
\be
\mathbb{E}\{\by^H\bG^H\bG\by\} \le p.
\ee

For notational convenience, let the entries of an $N\times N$ matrix $\bS$ be given by
\begin{align} \label{def_S}
s_{ij} \triangleq \bQ_{ji} [(\bH^*)^{-1}\bH^{-T}]_{ij}.
\end{align}
From \eqref{formula_uplink}, the relay's transmit power can be evaluated as
\begin{align}  \label{9}
\mathbb{E}[&\bx^H \bH^H  \bG^H \bG \bH \bx + \bu^H \bG^H \bG \bu ]\notag\\
=&\text{Tr}[\bG\bH\bH^H\bG^H+\gamma^2 \bG\bG^H]\\
=&\text{Tr}[\bH^{-T} \bA (\bI + \gamma^2\bP\bH^{-1}\bH^{-H}\bP^T)\bA^H (\bH^*)^{-1} ]\\
=&\text{Tr}[\bA \bQ \bA^H(\bH^*)^{-1}\bH^{-T}]\\
=&\ba^H\bS\ba,
\end{align}
where we have used \eqref{10} and \eqref{dd}, and $\ba=[a_1,\cdots,a_N]^T$ is the gain vector with the diagonal elements of $\bA$. The power constraint on the relay is thus expressed as
\begin{align}
\ba^H\bS\ba \leq p.
\end{align}

\tikzstyle{transition}=[rectangle,draw=black!100,fill=none,semithick,
inner sep=0pt,minimum size=6mm]
\tikzstyle{mimoswitch}=[rectangle,draw=black!100,fill=none,semithick,
inner sep=0pt,minimum width=12mm, minimum height=26mm]

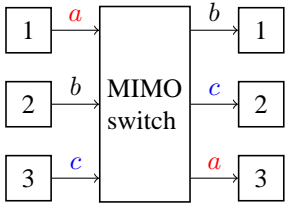
\begin{figure}
\centering
\begin{tikzpicture}[x=2.2em,y=2.8em]
\node (1s1) at (-1,2) [transition] {1};
\node (1s2) at (-1,1) [transition] {2};
\node (1s3) at (-1,0) [transition] {3};

\node (2s3) at (3,2) [transition] {1};
\node (2s1) at (3,1) [transition] {2};
\node (2s2) at (3,0) [transition] {3};

\node (m1) at (1,1) [mimoswitch] {\parbox{10mm}{MIMO \\ switch}};

\path[->] (1s1) edge node [above]{$\color{red}a$} (m1.west|-1s1);
\path[->] (m1.east|-2s3) edge node [above]{$b$} (2s3);
\path[->] (1s2) edge node [above]{$b$} (m1.west|-1s2);
\path[->] (m1.east|-2s1) edge node [above]{$\color{blue}c$} (2s1);
\path[->] (1s3) edge node [above]{$\color{blue}c$} (m1.west|-1s3);
\path[->] (m1.east|-2s2) edge node [above]{$\color{red}a$} (2s2);
%\node at (1,-1) {Slot 1};

\end{tikzpicture}
\caption{A traffic demand among three stations.}
\label{3node_realization}
\end{figure}

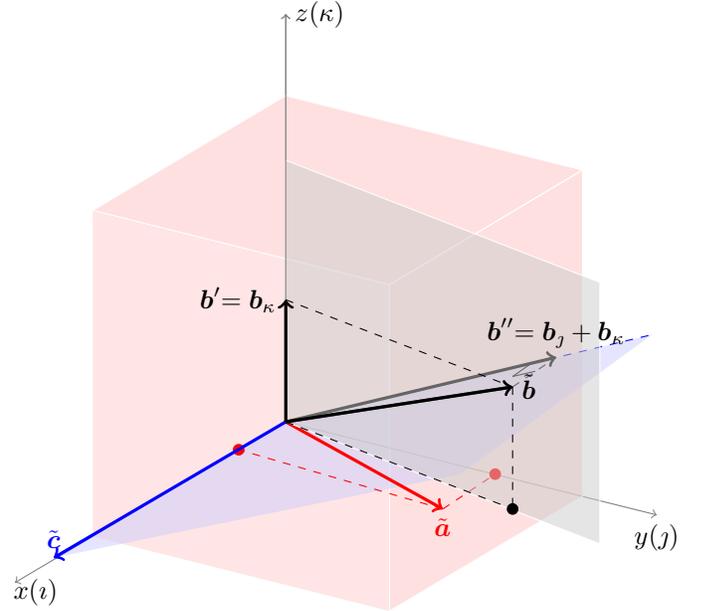
\begin{figure}
\centering
\begin{tikzpicture}[line join=round,x=6.6em,y=6.6em]
\draw[arrows=-,color=gray](0,0)--(-.553,-.329);
\draw[arrows=-,color=gray](0,0)--(0,.936);
\draw[arrows=-,color=gray](0,0)--(1.701,-.428);
\draw[arrows=->,color=gray](0,.936)--(0,2.339);
\draw[arrows=-,color=gray](-.553,-.329)--(-1.106,-.658);
\draw[arrows=->,color=gray](1.701,-.428)--(2.126,-.534);
\draw[arrows=->,color=gray](-1.106,-.658)--(-1.552,-.922);
%\draw[dotted](.298,-.543)--(.851,-.214);
%\draw[dotted](.851,.722)--(.851,-.214);
%\draw[dotted](1.148,-.756)--(1.701,-.428)--(1.701,.508);
%\draw[dotted](.851,.722)--(0,.936);
%\draw[dotted](-.553,.607)--(0,.936);
\filldraw[draw=none,fill=red!20,fill opacity=0.5](-1.106,-.658)--(0,0)--(1.701,-.428)--(.595,-1.085)--cycle;
\filldraw[draw=none,fill=red!20,fill opacity=0.5](-1.106,-.658)--(0,0)--(0,1.871)--(-1.106,1.213)--cycle;
\filldraw[draw=none,fill=red!20,fill opacity=0.5](0,1.871)--(0,0)--(1.701,-.428)--(1.701,1.444)--cycle;
\filldraw[draw=white,fill=red!20,fill opacity=0.2](.595,.786)--(1.701,1.444)--(0,1.871)--(-1.106,1.213)--cycle;
\filldraw[draw=white,fill=red!20,fill opacity=0.2](.595,.786)--(1.701,1.444)--(1.701,-.428)--(.595,-1.085)--cycle;

\filldraw[draw=none,fill=blue!20,fill opacity=0.5](0,0)--(-1.33,-.78)--(1.0,-.3)--(2.1,.5)--cycle;

\draw[draw=red,very thick,->](0,0)--(.9,-.5);
\draw[draw=red,dashed](.9,-.5)--(-.27,-.16);
\draw[draw=red,dashed](.9,-.5)--(1.2,-.3);
\filldraw[color=red](-.27,-.16) circle (2pt);
\filldraw[color=red](1.2,-.3) circle (2pt);

\draw[draw=blue,very thick,->](0,0)--(-1.33,-.78);

\draw[draw=black, very thick,->](0,0)--(1.3,.2);
\draw[draw=blue,dashed](0,0)--(2.1,.5);
\draw[draw=black,dashed](1.3,.2)--(1.55,.37);
\draw[draw=black,very thick, ->](0,0)--(1.55,.37);
%\draw[draw=black](1.5,.3)--(1.6,.38);
%\draw[draw=black](1.5,.3)--(1.4,.27);
\draw[draw=black](1.3,.26)--(1.4,.33);
\draw[draw=black](1.3,.26)--(1.43,.29);

\filldraw[draw=white,fill=black!20,fill opacity=0.5](0,0)--(1.8,.-.7)--(1.8,.8)--(0,1.5)--cycle;
\draw[arrows=-,color=gray](0,0)--(0,.936);
\draw[arrows=->,color=gray](0,.936)--(0,2.339);

\draw[draw=black,very thick,->](0,0)--(1.3,.2);
\draw[draw=black,dashed](0,.7)--(1.3,.2);
\draw[draw=black,very thick, ->](0,0)--(0,.7);
\draw[draw=black,dashed](1.3,.2)--(1.3,.-.5);
\draw[draw=black,dashed](0,0)--(1.3,.-.5);
\filldraw[color=black](1.3,.-.5) circle (2pt);

\path (2.126,-.534) node[below] {$\color{black}y(\jmath)$}
	       (0,2.339) node[right] {$\color{black}z(\kappa)$}
	       (-1.432,-.85) node[below] {$\color{black}x(\imath)$};
\path (0,.7) node[left] {$\color{black}\bb'$$=\bb_\kappa$};
\path (1.55,.37) node[above] {$\color{black}\bb''$$=\bb_\jmath+\bb_\kappa$};

\path (.9,-.5) node[below]{$\color{red}\tilde \ba$};
\path (1.3,.2) node[right]{$\color{black}\tilde \bb$};
\path (-1.33,-.78) node[above]{$\color{blue}\tilde \bc$};
\end{tikzpicture}% End sketch output
\caption{Assume the channel outputs of three users' signal vectors: the back vector $\tilde \bb$ is the desired signal; the red vector $\color{red}\tilde \ba$ is the self-interference; the blue vector $\color{blue}\tilde \bc$ is the other interference signal.}
\label{phy_explain}
\end{figure}

\subsection{Network-coded Relaying}

The MIMO switch described in Section II.A makes use of zero-forcing relaying, by which data are switched based on a permutation matrix, whose diagonal elements are all zero. If the $k$th diagonal element is nonzero, it means that the relay forwards the signal from station $k$ back to itself. There is no need to force a diagonal element to zero because the self-interference is known and can be removed. This is the basic idea behind physical-layer network coding (PNC) \cite{Zhang06physical-layernetwork}, which underlies many other works, e.g., \cite{guo10,yuan11,ho09,meixia11}.

In general, allowing PNC improves the performance. Even though the self-interference costs the relay some energy, removing the constraint on the diagonal of the derangement enlarges the possible set of the optimization problem, and thereby yields a better optimal objective. This can also be seen from an example of multi-way relaying, in which three single-antenna stations communicate with the help of a three-antenna relay. The traffic switching pattern among the three stations are defined in Fig.~\ref{3node_realization}. The desired signal of station $1$ is the black signal $b$; the red signal $\color{red}a$ is its self-interference; the blue signal $\color{blue}c$ is the other interference signal. We assume the three signals and the channel gains are real-valued for simpler illustration. A sketch of the three signals after passing through the channel is shown in Fig.~\ref{phy_explain}, i.e., $\color{red}\tilde \ba$, $\color{black}\tilde \bb$ and $\color{blue}\tilde \bc$, which are all three-dimensional vectors due to three antennas at the relay. As shown in Fig.~\ref{phy_explain}, we assume that $\color{blue}\tilde \bc$ is along  $x$-axis; $\color{red}\tilde \ba$ is in $xy$-plane; and $\tilde \bb=\bb_\imath+\bb_\jmath+\bb_\kappa$. In the case of zero-forcing, the desired signal of station $1$, $\tilde \bb$, should be projected to $z$-axis, which is perpendicular to $xy$-plane spanned by the two signals $\color{red}\tilde \ba$ and $\color{blue}\tilde \bc$. In this way both the two interference signals are zeroed out and we get a post-processing signal $\color{black}\bb'$$=\bb_\kappa$. However, with network coding, we do not have to zero out the self-interference. By dropping this constraint, we only need to project the desired signal to the $yz$-plane which is perpendicular to the interference signal $\color{blue}\tilde \bc$, then we obtain the projected signal $\color{black}\bb''$$=\bb_\jmath+\bb_\kappa$. Obviously, the projection $\color{black}\bb''$ with network coding is stronger than $\color{black}\bb'$ by zero-forcing.

With PNC, we rewrite (\ref{4}) as
\begin{equation} \label{formula_G2}
\bH^T\bG\bH=\bA(\bP+\bB),
\end{equation}
where $\bB = \diag\{b_1, \cdots, b_N\}$ is a diagonal matrix to be determined.

As an example of a symmetric derangement, the corresponding network-coded switch matrix has the pattern:
\be \label{ps}
\bP_1 + \bB =
 \left[ {\begin{array}{*{20}c}
   b_1 & 0 & 0 & 1  \\
   0 & b_2 & 1 & 0  \\
   0 & 1 & b_3 & 0  \\
   1 & 0 & 0 & b_4  \\
\end{array}} \right].
\ee
This switching pattern corresponds to two pairwise data exchanges, in which stations $1$ and $4$ are one pair and stations $2$ and $3$ are the other pair. Hence, the network-coded MIMO switching can construct multiple parallel two-way relay transmissions. An example of asymmetric derangement may have the following switch matrix:
\be  \label{pa}
\bP_2+\bB =
 \left[ {\begin{array}{*{20}c}
   b_1 & 0 & 1 & 0  \\
   1 & b_2 & 0 & 0  \\
   0 & 0 & b_3 & 1  \\
   0 & 1 & 0 & b_4  \\
\end{array}} \right].
\ee
This generalizes the traditional physical-layer network coding setting as presented in \cite{Zhang06physical-layernetwork} because the data exchange is not pairwise. For both symmetric and asymmetric switch matrices, we shall refer to the corresponding matrices with nonzero diagonal as network-coded switch matrix, and the associated setup as MIMO switching with network-coded relaying.

Once the receivers compensate for the amplification and remove  self-interference, the resulting signals form this vector:
\be
\hat \br = \bP\bx+\bv',
\ee
where the post-processing noise $\bv'$ is expressed as
\begin{align}
\bv' = (\bP+\bB)\bH^{-1}\bu+\bA^{-1}\bw.
\end{align}
The covariance of $\bv'$ is written as
\be
\mathbb{E}\{\bv'\bv'^H\}=\bR -\bI + \sigma^2 \bA^{-1}\bA^{-H},
\ee
where
\be \label{def_R}
\bR \triangleq \bI + \gamma^2(\bP+\bB)\bH^{-1}\bH^{-H}(\bP+\bB)^H.
\ee
The constraint of the relay power consumption is rewritten as
%\be
%\begin{aligned}
%\Omega&(\bA,\bB)\\
%\triangleq &\text{Tr}[\bH^{-T} \bA (\bR+\bB\bP^T+\bP\bB^H+\bB\bB^H)\bA^H (\bH^*)^{-1} ] \\
%=&\text{Tr}[\bA (\bR+\bB\bP^T+\bP\bB^H+\bB\bB^H)\bA^H (\bH^*)^{-1} \bH^{-T}] \label{def_O}\\
%\leq & p.
%\end{aligned}
%\ee
\be
&&\kern-3.3em \Omega (\bA,\bB) \notag\\
&&\kern-3.3em \triangleq \text{Tr}[\bH^{-T} \kern-.3em \bA (\bR+\bB\bP^T \kern-.3em+\bP\bB^H+\bB\bB^H)\bA^H (\bH^*)^{-1} ] \\
&&\kern-3.3em =\text{Tr}[\bA (\bR+\bB\bP^T \kern-.3em+\bP\bB^H \kern-.3em +\bB\bB^H)\bA^H (\bH^*)^{-1} \bH^{-T}] \label{def_O}\\
&&\kern-3.3em \leq  p. 
\ee

We have thus established another framework for MIMO switching by network-coded relaying method, in which $\bP+\bB$ is a switch matrix. This framework can be generalized to the case where the switch matrix realizes a general transmission pattern. For example, if there are two nonzero non-diagonal elements in a column of the switch matrix, then a multicast connection is being realized within one switch matrix. In fact, by scheduling a set of switch matrices, each realizing a permutation, we can satisfy arbitrary user traffic patterns.

\section{The Maxmin Problem}

In this section, we formulate a \emph{maxmin problem}, in which the minimum received SNR among all the stations is maximized. According to (\ref{8}), the post-processing noise power of receiver $i$ is
\be
\epsilon_i= q_i - 1 +\frac{\sigma^2}{|a_i|^2},
\ee
where $q_i\triangleq\bQ_{ii}$. Thus, the received SNR is ${1}/{\epsilon_i}$. Since the system is half-duplex with uplink and downlink of equal duration, the throughput achieved by Gaussian signaling is
\be
c_i = \frac{1}{2}\log_2 \left(1+\frac{1}{\epsilon_i}\right),
\ee
in bits per symbol period.

We first study the maxmin problem without PNC in Section III.A, and then allow PNC in Section III.B.

\subsection{Zero-forcing Relaying}

Let $\epsilon$ denote the maximum post-processing noise power among the stations. An optimization problem is formulated as follows:
\begin{subequations} \label{maxmin}
\begin{align}
\min \limits_{\ba}  &\qquad \epsilon  \label{22a}\\
\text{s.t.} &\qquad  |a_i|^2 \ge \frac{\sigma^2}{\epsilon + 1 - q_i},\ i=1,\cdots,N, \label{22b}\\
 &\qquad \ba^H \bS \ba\le p, \label{22c}\\
 &\qquad \epsilon \ge 0.
\end{align}
\end{subequations}

\begin{lemma}  \label{lem_eq2}
Every optimal solution for the optimization problem (\ref{maxmin}) must satisfy the relay power constraint (\ref{22c}) with equality.
\end{lemma}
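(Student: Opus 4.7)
The plan is to argue by contradiction: assume an optimal pair $(\ba^\star,\epsilon^\star)$ has $\ba^{\star H}\bS\ba^\star < p$, and construct a strictly better feasible point by uniformly scaling up the gain vector. Intuition: unused relay power can be spent to boost every $|a_i|^2$, which shrinks the noise-power lower bound $\sigma^2/|a_i|^2$ at every receiver and therefore lets $\epsilon$ decrease.

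First, I would recast constraint \eqref{22b} in the equivalent form $\epsilon \ge \sigma^2/|a_i|^2 + q_i - 1$, so that at any optimum one must have
\begin{equation*}
\epsilon^\star = \max_{i}\left(\frac{\sigma^2}{|a_i^\star|^2} + q_i - 1\right),
\end{equation*}
otherwise $\epsilon$ could be reduced with $\ba^\star$ unchanged. Next, I would record that $\ba\mapsto\ba^H\bS\ba$ is a nonnegative Hermitian quadratic form: this follows from the derivation in \eqref{9}, where $\ba^H\bS\ba$ is shown to equal the relay's transmit power $\mathrm{Tr}[\bA\bQ\bA^H(\bH^*)^{-1}\bH^{-T}]\ge 0$. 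In particular, it scales as $|\alpha|^2$ under $\ba\mapsto\alpha\ba$.

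Now the scaling step. Since $\ba^{\star H}\bS\ba^\star < p$, I can pick a real $\alpha > 1$ with $\alpha^2\,\ba^{\star H}\bS\ba^\star \le p$, so that $\ba' := \alpha\ba^\star$ still satisfies \eqref{22c}. Because $|a_i'|^2 = \alpha^2|a_i^\star|^2 > |a_i^\star|^2$ for every $i$, the induced minimum-feasible $\epsilon$ becomes
\begin{equation*}
\epsilon' = \max_{i}\left(\frac{\sigma^2}{\alpha^2|a_i^\star|^2} + q_i - 1\right) < \epsilon^\star,
\end{equation*}
and $\epsilon' \ge \max_i(q_i - 1)\ge 0$ since $\bQ\succeq\bI$ by \eqref{dd}. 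Thus $(\ba',\epsilon')$ is feasible with a strictly smaller objective, contradicting optimality of $(\ba^\star,\epsilon^\star)$.

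The only subtlety I anticipate is ensuring the strict decrease $\epsilon'<\epsilon^\star$ is genuine, i.e., that the index attaining the maximum has $\sigma^2/|a_i^\star|^2>0$. This is automatic: any feasible $\ba^\star$ has finite $|a_i^\star|^2$ (otherwise \eqref{22c} fails since $\bS$ is positive on the relevant directions), so $\sigma^2/|a_i^\star|^2>0$ and the max-of-affine-decreasing functions decreases strictly in $\alpha$. Everything else is bookkeeping around the scaling.
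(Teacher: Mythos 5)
Your proof is correct and follows essentially the same contradiction argument as the paper: if the power constraint is slack at an optimum, enlarge every $|a_i|$ while staying within the power budget, so that each term $\sigma^2/|a_i|^2 + q_i - 1$ strictly decreases and the objective improves. The only cosmetic difference is that you scale $\ba$ multiplicatively and exploit the homogeneity of $\ba^H\bS\ba$ to control the power explicitly, whereas the paper adds $\tau/2$ to each magnitude and invokes continuity; your version also makes explicit the minor points (strictness of the decrease, $\epsilon'\ge 0$) that the paper leaves implicit.
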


\begin{proof}
Let $\tilde \ba=[\tilde a_1,\cdots,\tilde a_N]^T$ denote the optimal solution for the problem (\ref{maxmin}) with the optimal objective $\tilde \epsilon$.
Suppose $\tilde \ba$ satisfies the constraint \eqref{22c} with strict inequality. Then there exists $\tau>0$ such that
\be
\ba^H \bS \ba < p,
\ee
for all $\ba$ with $|a_i|\in(|\tilde a_i|-\tau,|\tilde a_i|+\tau),\ i=1,\cdots,N$. Let $|\tilde a_i'| = |\tilde a_i| + \frac{\tau}{2},\ i=1,\cdots,N$, and let 
\be
\Xi_i(a_i) \triangleq q_{i}+\frac{\sigma^2}{|a_i|^2}.
\ee
Then
\be
\Xi_i(\tilde a_i') < \Xi_i(\tilde a_i),\quad i=1,\cdots,N.
\ee
Since each $\Xi_i(\tilde a_i')$ is smaller than $\Xi_i(\tilde a_i)$ for all $i$, the maxmin objective becomes smaller with the solution $\tilde \ba'=[a_1',\cdots,a_N']^T$. Thus, $\tilde \ba$ is not the optimal solution, and contradiction arises.
\end{proof}

With this lemma, we could eliminate the feasible solutions with which the relay consumes less power than $p$.

\begin{proposition} \label{lem_np}
The maxmin problem\footnote{Essentially, it is a minmax problem with respect to (w.r.t.) the post-processing noise power. In this paper, we call it the maxmin problem w.r.t. the received SNR to keep it consistent with the following equal-SNR problem proposed in Section IV.} \eqref{maxmin} is equivalent to the following quadratically constrained quadratic program (QCQP),
\begin{subequations}  \label{maxmin2}
\begin{align}
\min \limits_{\ba} &\qquad \ba^H \bS \ba \label{43a}\\
\text{s.t.} &\qquad |a_i|^2 \ge \frac{\sigma^2}{\tilde \epsilon+1 - q_i},\ i=1,\cdots,N. \label{43b}
\end{align}
\end{subequations}
\end{proposition}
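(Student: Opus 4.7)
The plan is to prove the stated equivalence at the level of optimal solutions, in two directions: (i) any $\tilde\ba$ optimal for \eqref{maxmin} is also optimal for \eqref{maxmin2}, and (ii) any $\ba^\star$ optimal for \eqref{maxmin2} yields an optimal pair $(\ba^\star,\tilde\epsilon)$ for \eqref{maxmin}. The key lever throughout is Lemma~\ref{lem_eq2}, which pins $\tilde\ba^H\bS\tilde\ba = p$ at every optimum of \eqref{maxmin}.

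The main step, and where the only subtlety lives, is a rescaling argument establishing that every $\ba'$ feasible for \eqref{maxmin2} satisfies $(\ba')^H\bS\ba' \ge p$. Suppose for contradiction that $(\ba')^H\bS\ba' < p$, and set $\lambda := \sqrt{p/((\ba')^H\bS\ba')} > 1$. Then $\lambda\ba'$ saturates the power budget, and from $|a_i'|^2 \ge \sigma^2/(\tilde\epsilon + 1 - q_i)$ one obtains
\[
q_i - 1 + \frac{\sigma^2}{\lambda^2|a_i'|^2} \;\le\; q_i - 1 + \frac{\tilde\epsilon + 1 - q_i}{\lambda^2} \;<\; \tilde\epsilon
\]
for every $i$; hence the pair $(\lambda\ba',\max_i[q_i - 1 + \sigma^2/(\lambda^2|a_i'|^2)])$ is feasible for \eqref{maxmin} with objective strictly below $\tilde\epsilon$, contradicting optimality. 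The argument succeeds precisely because the QCQP objective, the constraint \eqref{43b}, and the noise term $\sigma^2/|a_i|^2$ are all degree-two homogeneous in $|\ba|$, so one uniform dilation simultaneously fills up the power budget and strictly slackens every per-station bound. (The strict inequality above uses $\tilde\epsilon + 1 - q_i > 0$, which holds at any feasible point because $q_i - 1 \ge 0$ while $\sigma^2 > 0$ forces $\tilde\epsilon$ strictly above $q_i - 1$.)

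With this lower bound $p$ on the QCQP optimum in hand, both directions fall out. For (i), $\tilde\ba$ is itself feasible for \eqref{maxmin2} (the constraint \eqref{22b} at $\epsilon = \tilde\epsilon$ is exactly \eqref{43b}) and attains value $p$ by Lemma~\ref{lem_eq2}, matching the lower bound. For (ii), optimality of $\ba^\star$ for \eqref{maxmin2} yields $(\ba^\star)^H\bS\ba^\star \le p$, so $(\ba^\star,\tilde\epsilon)$ satisfies every constraint of \eqref{maxmin} and attains its optimal value $\tilde\epsilon$.
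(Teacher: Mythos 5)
Your proof is correct and follows essentially the same route as the paper's: both reduce the equivalence to showing that the relay power saturates at $p$ (via Lemma~\ref{lem_eq2}) and then cross-check feasibility and optimality of solutions between \eqref{maxmin} and \eqref{maxmin2}. The only difference is minor: you establish the lower bound $\ba^H\bS\ba\ge p$ for \emph{every} point feasible for \eqref{maxmin2} by an explicit rescaling that directly contradicts the optimality of $\tilde\epsilon$, whereas the paper obtains that conclusion only at the QCQP optimizer by invoking Lemma~\ref{lem_eq2} a second time.
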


\begin{proof}
Let $\tilde \ba$ be an optimal solution of (\ref{maxmin}), and the associated optimal objective is $\tilde \epsilon\ge 0$. Note that $\tilde \epsilon$ is the maximum noise power for all $i$.

The solution $\tilde \ba$ satisfies \eqref{43b}, where the largest noise power among all $i$ is equal to $\tilde \epsilon$. Hence, $\tilde \ba$ is a feasible solution of (\ref{maxmin2}). According to Lemma \ref{lem_eq2}, the power consumption of the relay is $\tilde\ba^H \bS \tilde\ba=p$. Let $\hat \ba$ be an optimal solution of (\ref{maxmin2}). The power consumption of the relay can not be larger than $p$, otherwise $\hat \ba$ is even worse than $\tilde \ba$ for (\ref{maxmin2}). If the power consumption of the relay $\hat \ba^H\bS\hat \ba$ is strictly smaller than $p$, then $\hat \ba$ is a feasible solution of (\ref{maxmin}), which is at least as good as $\tilde \ba$ since the maximum noise power is not larger than $\tilde \epsilon$. That is, $\hat \ba$ is an optimal solution of (\ref{maxmin}). However, the power consumption $\hat \ba^H \bS \hat \ba<p$ contradicts Lemma \ref{lem_eq2}. Thus, $\hat \ba^H \bS \hat \ba=p$. Furthermore, there is at least one constraint in (\ref{maxmin2}) in which equality holds for $\hat \ba$, otherwise $\hat \ba$ is a better solution for (\ref{maxmin}).
We have proved the optimal solution of (\ref{maxmin}) is also the optimal solution of (\ref{maxmin2}), vice versa. Therefore, the two problems are equivalent.
\end{proof}

\begin{proposition} \label{lem_np}
The maxmin problem (\ref{maxmin}) is NP-hard in the size of $N$ when $\gamma \neq 0$.
\end{proposition}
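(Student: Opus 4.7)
The plan is to establish NP-hardness by exhibiting a polynomial-time reduction from a classical NP-hard quadratic optimization problem to the maxmin problem. By Proposition 3.1, (\ref{maxmin}) is equivalent to the non-convex QCQP (\ref{maxmin2}), in which a convex quadratic $\ba^{H}\bS\ba$ is minimized over the non-convex feasible set $\{\ba:|a_i|^{2}\geq c_i\}$, where $c_i=\sigma^{2}/(\tilde\epsilon+1-q_i)$. It therefore suffices to show that this QCQP is NP-hard in $N$.

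The first step is to show that, when $\gamma\neq 0$, the matrix $\bS$ defined in (\ref{def_S}) can be made to realize essentially any target Hermitian positive semi-definite matrix through suitable choice of the channel matrix $\bH$ (holding $\bP$ fixed, e.g., to a convenient cyclic permutation). The role of $\gamma\neq 0$ is essential here: when $\gamma=0$ one has $\bQ=\bI$, so (\ref{def_S}) reduces to $s_{ij}=\delta_{ij}[\overline{(\bH\bH^{H})^{-1}}]_{ii}$, leaving $\bS$ diagonal and the QCQP trivially solvable; when $\gamma\neq 0$, the extra term $\gamma^{2}\bP\bH^{-1}\bH^{-H}\bP^{T}$ enters $\bQ$ and injects off-diagonal couplings into $\bS$ that can be tuned by varying $\bH$.

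The second step is the actual reduction. With $\bS$ arranged to be a real symmetric matrix of the form $c\bI+\bT$ for a prescribed zero-diagonal $\bT$ and $c$ sufficiently larger than the spectral norm of $\bT$, the marginal cost of increasing any $|a_i|^{2}$ above $c_i$ dominates any possible saving from off-diagonal cancellation, so every constraint $|a_i|^{2}\geq c_i$ is tight at the QCQP optimum. With the moduli $|a_i|$ thus pinned, the QCQP collapses to minimizing $\sum_{i,j}T_{ij}\cos(\theta_i-\theta_j)$ over the phases $\theta_i$, an angular minimization that is known to be NP-hard; in particular, under the further restriction $\theta_i\in\{0,\pi\}$ it is exactly the MAX-CUT problem on the graph with weights $T_{ij}$, which is classically NP-hard. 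Hence solving the QCQP to optimality would solve MAX-CUT, contradicting $P\neq NP$.

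The main obstacle will be the first step: giving an explicit, polynomial-time construction of $\bH$ (together with $\gamma$) that realizes a prescribed $\bS=c\bI+\bT$ via the Hadamard-type product in (\ref{def_S}). A practical route is to parametrize $\bH$ so that $(\bH\bH^{H})^{-1}$ matches a chosen pattern (for instance, one can first prescribe $\bM=(\bH\bH^{H})^{-1}$ as a positive-definite Hermitian matrix and then recover $\bH$ by a Cholesky-type factorization), then solve for the remaining entries of $\bM$ and for $\gamma$ so that the product $[\bI+\gamma^{2}\bP\bM\bP^{T}]_{ji}\,\overline{\bM}_{ij}$ reproduces the required diagonal $c$ and off-diagonal $T_{ij}$ of $\bS$ up to a uniform positive scaling absorbed by $\ba$. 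Once this construction is verified to run in polynomial time in the MAX-CUT input size, the chain MAX-CUT $\to$ phase problem $\to$ QCQP $\to$ maxmin is polynomial, completing the NP-hardness proof.
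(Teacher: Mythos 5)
Your route is genuinely different from the paper's, and more ambitious: the paper disposes of this proposition in one sentence, by observing that the QCQP \eqref{maxmin2} has exactly the form of problem (2) in \cite{luo06} (minimize $\ba^H\bS\ba$ subject to $|a_i|^2\ge$ given constants) and citing the NP-hardness established there, whereas you attempt a self-contained reduction from MAX-CUT via an explicit construction of channel instances. As written, however, the proposal has two genuine gaps. The first is the realizability step, which you yourself flag as ``the main obstacle'': it is not a technicality but the entire content of such a reduction, and it is nontrivial because $\bS$ is far from arbitrary. Writing $\bM=(\bH^H\bH)^{-1}$, the definition \eqref{def_S} gives $s_{ij}=\gamma^2[\bP\bM\bP^T]_{ji}\,[\bM^*]_{ij}$ for $i\neq j$, so both Hadamard factors are tied to the \emph{same} positive semidefinite matrix $\bM$, and that same $\bM$ also fixes $q_i=1+\gamma^2[\bP\bM\bP^T]_{ii}$ and hence the constraint levels $\sigma^2/(\tilde\epsilon+1-q_i)$. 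Prescribing $\bS=c\bI+\bT$ with an arbitrary sign pattern in $\bT$ and then ``solving for $\bM$ and $\gamma$'' is exactly the inverse problem you would have to construct explicitly, in polynomial time, while certifying $\bM\succ 0$, invertibility of $\bH$, and positivity of the constraint levels; none of this is done. (The paper's citation-style argument is admittedly loose on the same point---same algebraic form is not the same as hardness of the channel-generated instance family---but your proof promises to close precisely this gap and does not.)

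The second gap is a logical one in the hardness transfer. After pinning the moduli, your QCQP becomes a minimization of $\sum_{ij}T_{ij}\cos(\theta_i-\theta_j)$ over \emph{continuous} phases, and NP-hardness of its restriction to $\theta_i\in\{0,\pi\}$ (MAX-CUT) says nothing about the unrestricted problem: the continuous optimum can be strictly smaller than the best antipodal assignment, so an oracle for the QCQP does not return the MAX-CUT value, and the chain ``MAX-CUT $\to$ phase problem $\to$ QCQP'' breaks at its first arrow. To repair it you must either build gadgets forcing the continuous optimum onto binary phases, or correctly invoke (with proof or a precise citation) NP-hardness of unit-modulus/angular quadratic minimization---which is essentially what the paper obtains for free by citing \cite{luo06}. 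A smaller quantitative issue: the tightness of all constraints $|a_i|^2\ge c_i$ at the optimum does not follow from $c>\|\bT\|$; the first-order argument needs roughly $c|a_i|>\|\bT\|\,\|\ba\|$, i.e.\ $c$ of order $\sqrt{N}\|\bT\|$ when the $c_i$ are equal. This is fixable, but the constant as stated is wrong. If you want a short correct proof, the paper's route---identify \eqref{maxmin2} with the QCQP class of \cite{luo06} and cite its NP-hardness---is the efficient one; your construction, if completed, would be a stronger and more informative result, but both missing steps above must actually be supplied.
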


\begin{proof}
Problem (\ref{maxmin2}) is equivalent to problem (2) in \cite{luo06}, which has been proved to be NP-hard in general. Thus, problem \eqref{maxmin} is NP-hard as well.  We do not repeat the steps in \cite{luo06}.
\end{proof}

We can use the semidefinite relaxation (SDR) technique in \cite{luo06} to find a throughput upper bound and a suboptimal solution for (\ref{maxmin2}). Let $\bX=\ba\ba^H$, then (\ref{maxmin2}) can be rewritten as
\begin{subequations}  \label{maxmin3}
\begin{align}
\min_{ \mathclap{\boldsymbol{X} \in \mathbb{C}^{N\times N},\ell_i}} &\qquad \text{Tr}[\bS\bX] \label{44a}\\
\text{s.t.} &\qquad \text{Tr}[\bE_i\odot\bX] - \ell_i = \frac{\sigma^2}{\tilde \epsilon + 1 - q_i},\\
&\qquad  \ell_i\ge 0,\ i=1,\cdots,N,\\
&\qquad \bX \succeq {\bf 0},\\
&\qquad \text{rank}(\bX)=1, \label{28e}
\end{align}
\end{subequations}
where $\odot$ denotes element-by-element multiplication, i.e., the Hadamard product; $\bX \succeq {\bf 0}$ means the matrix $\bX$ is symmetric positive semidefinite; $\bE_i$ is an $N\times N$ matrix, in which element $(i,i)$ is $1$ and all the other elements are $0$; $\ell_i,\ i=1,\cdots,N$ are ``slack'' variables. If we drop the rank-$1$ constraint (\ref{28e}), problem (\ref{maxmin3}) is in the standard form of a semidefinite programming problem (SDP).

\medskip\noindent{\bf\emph{Overview of Our Approach to Maxmin Optimization Problem}}: We now overview the approach of our numerical investigation of the maxmin optimization problem. Although we focus on zero-forcing relaying here, we use the same approach for the study of maxmin optimization for network-coded relaying as well after its corresponding formulation is set up in Part B.

We recognize that it is difficult to solve the maxmin problem \eqref{maxmin}. Therefore, we attempt to find its suboptimal solution. The logical steps and the rationale for our approach to finding a suboptimal solution are summarized below:
\begin{enumerate}
\item We have already proved that the maxmin problem is equivalent to the QCQP \eqref{maxmin2}. Unfortunately, solving for the optimal solution of the QCQP is still difficult.
\item Fortunately, the SDR technique proposed in \cite{luo06} can be used to find a good suboptimal solution of the QCQP. The technique consists of the following three steps:
\begin{enumerate}
\item Rewrite the QCQP as an optimization problem \eqref{maxmin3} with a constraint of rank $1$.
\item By dropping the constraint of rank $1$, the problem \eqref{maxmin3} is turned into a SDP. The optimal solution of the SDP can be found by some toolboxes, such as SeDuMi and SDPT3. Importantly, the optimal objective of the SDP is an upper bound of that of the QCQP, since the QCQP has one extra constraint, i.e., the constraint of rank $1$.
\item We then use the randomization technique in \cite{luo06} to approximate the optimal solution of the QCQP based on the optimal solution of the SDP. With the randomization technique, we get a result, which satisfies the constraints of the QCQP, and this result is at least a suboptimal solution of the QCQP.
  \end{enumerate}
\end{enumerate}

In \cite{luo06}, the authors showed that the suboptimal solution obtained as above can achieve an objective close to the global optimum. As will be shown,
our simulation results also validate the near-optimality statement in \cite{luo06}.

\medskip\noindent{\bf\emph{Numerical Method (One-dimensional Search)}}: As per the discussion in the above overview, we use the suboptimal solution of  \eqref{maxmin3} to approximate the solution of \eqref{maxmin}. The optimal solution of \eqref{maxmin} is the value of $\tilde \epsilon$ in \eqref{maxmin3}, for which the objective of \eqref{maxmin3}, i.e., the minimum power consumption is $p$.
We solve for $\tilde \epsilon$ by one-dimension search from $\max\limits_{j}q_j-1$, i.e., when the power consumption of the relay (\ref{44a}) approaches infinity. In each step of the search, $\tilde \epsilon$ is increased by a small amount $\tilde \delta$. Given an $\tilde \epsilon$, we can use SDP solvers to find the minimum power consumption of the relay. When the step size $\tilde \delta$ is small enough, the first value of $\tilde \epsilon$, for which the minimum power consumption is $p$, is the solution \eqref{maxmin}.

%Since we drop a constraint, the solution corresponds to a throughput which is an upper bound of the optimal throughput of (\ref{maxmin3}). Besides the upper bound, we can use the randomization method in \cite{luo06} to find a suboptimal solution. Since the maxmin problem is equivalent to (\ref{maxmin3}), we have obtained a suboptimal solution and an upper bound of the maxmin problem (\ref{maxmin}). The throughput performance of the SDR scheme is evaluated numerically in Section V.

\subsection{Network-coded Relaying}

The network-coded problem is formulated as
\begin{subequations} \label{netcod}
\begin{align}
\min \limits_{\boldsymbol{A},\boldsymbol{B}}  &\qquad \epsilon  \label{28a}\\
\text{s.t.} &\qquad  \bR_{ii}-1+\frac{\sigma^2}{|a_i|^2}\leq \epsilon,\ i=1,\cdots,N, \label{28b}\\
 &\qquad \Omega(\bA,\bB)\leq p, \label{28c}\\
 &\qquad \epsilon \ge 0,
\end{align}
\end{subequations}
where $\bR$ and $\Omega(\bA,\bB)$ are defined in \eqref{def_R} and \eqref{def_O}, respectively.
In contrast to the previous problem of zero-forcing relaying, the constraint of the relay power consumption \eqref{28c} is quartic, making it more difficult than \eqref{maxmin}. We propose an iterative algorithm, in which $\bA$ and $\bB$ defined in \eqref{formula_G2} are optimized iteratively.

\subsubsection{Optimize $\bA$ for given $\bB$}
Given $\bB$, problem \eqref{netcod} can be formulated as \eqref{maxmin} by redefining
\begin{align}
s_{ij} &\triangleq [\bR+\bB\bP^T+\bP\bB^H+\bB\bB^H]_{ji} [(\bH^{T}\bH^*)^{-1}]_{ij},\\
q_{i} &\triangleq \bR_{ii},
\end{align}
Thus, the network-coded problem with fixed $\bB$ can be solved by the SDR technique, which is the same as solving (\ref{maxmin2}).

\subsubsection{Optimize $\bB$ for given $\bA$}
For ease of notation, let
\bal
\bM\triangleq&\bA^H (\bH^*)^{-1} \bH^{-T} \bA,\\
\bN\triangleq&\bI+\gamma^2\bH^{-1}\bH^{-H}.
\end{align}
The relay power consumption \eqref{def_O} is rewritten as
\begin{align}
\Omega(\bA,\bB)=&\text{Tr}[(\bP+\bB)\bN(\bP+\bB)^{H} \bM]\\
=&\text{Tr}[\bB\bN\bB^{H}\bM+\bB\bN\bP^{T}\bM  \notag\\
&+\bM\bP\bN\bB^{H}+\bP\bN\bP^{T}\bM]. \label{29}
\end{align}
Constraint \eqref{28b} can be rewritten as
\be  \label{30}
\left[\bI + \gamma^2 (\bP+\bB)\bH^{-1}\bH^{-H}(\bP+\bB)^{H}\right]_{ii}
\leq  \epsilon -\frac{\sigma^2}{|a_i|^2}.
\ee
Both the objective and the constraint w.r.t. $\bB$ are inhomogeneous quadratic. To homogenize this problem, we let $\tilde \bb = [\bb^T,t]^T$, where $\bb = \diag\{\bB\}$. Then the relay power consumption \eqref{29} can be written as
\be
\begin{aligned}
&\begin{bmatrix}
\bb^H & t^*
\end{bmatrix}
\begin{bmatrix}
\bS_t & \bff\\
\bff^H & 0
\end{bmatrix}
\begin{bmatrix}
\bb \\ t
\end{bmatrix}+\text{Tr}[\bP\bN\bP^{T}\bM]\\
&\triangleq  \tilde \bb^H \tilde \bS \tilde \bb+\text{Tr}[\bP\bN\bP^{T}\bM],
\end{aligned}
\ee
where $\bS_t = \bN^T\odot\bM$, $\bff = \diag\{\bM\bP\bN\}$, $t=1$.
%\begin{align}
%\bS_t &= \bN^T\odot\bM,\\
%\bff &= \diag\{\bM\bP\bN\},\\
%t &= 1.
%\end{align}
Let $\tilde \bX = \tilde \bb \tilde \bb^H$, then this problem can be homogenized as
\begin{subequations}  \label{maxmin4}
\begin{align}
\min_{\mathclap{\boldsymbol{\tilde X}\in \mathbb{C}^{N\times N},\ell_i}} &\quad \text{Tr}[\tilde \bS \tilde\bX]
\label{36a} \\
\text{s.t.} &\quad \text{Tr}[\tilde\bE_i\odot \tilde\bX] + \ell_i \\
&\quad = \frac{1}{\gamma^2}\left(\tilde \epsilon - \frac{\sigma^2}{|a_i|^2}\right)-1-\left[\bP(\bH^H\bH)^{-1}\bP^{H}\right]_{ii},\notag\\
&\quad  \ell_i\ge 0,\ i=1,\cdots,N,\\
&\quad \text{Tr}[\tilde\bE_{N+1}\odot \tilde\bX] = 1, \label{36d}\\
&\quad \bX \succeq {\bf 0},\\
&\quad \text{rank}(\bX)=1, \label{36f}
\end{align}
\end{subequations}
where $\tilde \bE_i,\ i=1,\cdots,N$, can be easily designed according to \eqref{30}. The last element of $\tilde \bE_{N+1}$ is $1$ and all the others are $0$. Thus,  constraint \eqref{36d} is actually $|t|^2 = 1$. The problem can be solved in the same way as solving \eqref{maxmin3}. Note that  constraint \eqref{36d} only requires the magnitude of $t$ to be $1$. Assume that the solution of \eqref{maxmin4} is $\tilde\bb_{opt} = [\hat\bb^T,\hat t]^T$, then $\tilde \bb_{opt} e^{j\alpha}$, for all $\alpha$ also guarantees the optimality of \eqref{maxmin4}, since adding a phase rotation does not change $\tilde \bX$. Thus, the solution of this problem is $\hat \bb e^{-j\angle \hat t}$.

To sum up the iterative algorithm, in order to design $\bA$ and $\bB$, we can first initialize $\bA$ or $\bB$ by any diagonal matrix. Then we iteratively perform the two solvers to optimize $\bA$ and $\bB$ until convergence.

%Unfortunately, the network-coded problem is more difficult than the non-PNC problem \eqref{maxmin} since its relay power consumption constraint is quartic. We propose a new method in Section III.B, which works for both pairwise and non-pairwise cases. In each step of optimizing $\bA$ or $\bB$, we use the SDR technique as well. The iterative method can achieve a local optimum. As shall be indicated by the simulation results, the proposed PNC schemes achieve significant gains over that  the optimal solution of the non-PNC case.

\section{The Equal-SNR Problem}

In this section, we require the received SNR at all stations to be identical and find its maximum of the \emph{equal-SNR problem}. Such a formulation was first investigated in \cite{allerton11}. The equal-SNR requirement not only guarantees perfect fairness, but also allows more efficient computation.

\subsection{Zero-forcing Relaying}

The optimization problem is formulated as
\begin{subequations}  \label{esnr}
\begin{align}
\min \limits_{\boldsymbol{a}} &\qquad \epsilon \\
\text{s.t.} &\qquad |a_i|^2 = \frac{\sigma^2}{\epsilon + 1 - q_i},\ i=1,\cdots,N, \label{14b}\\
 &\qquad \ba^H \bS \ba\le p, \label{14c}\\
 &\qquad \epsilon \ge 0.
\end{align}
\end{subequations}
The only difference between the equal-SNR problem and the maxmin problem \eqref{maxmin} is that the inequality \eqref{22b} is replaced by the equality \eqref{14b}. In \cite{wms11}, we only provided suboptimal solutions. In this paper, we will investigate optimal solutions and propose analytical and numerical suboptimal solutions. As has been proved in \cite{wms11}, the equal-SNR problem is feasible.  Another property is that the solution of the equal-SNR problem is feasible for the maxmin problem, since constraint \eqref{22b} of the maxmin problem has a larger possible set than constraint \eqref{14b} in the equal-SNR problem. Thus, the optimal objective of (\ref{esnr}) can not be smaller than that of (\ref{maxmin}).

\begin{proposition} \label{lem_nogap}
If no additional noise is introduced at the switch, then the optimal solution of (\ref{maxmin}) is such that each station has exactly the same post-processing noise power. That is, in this case, the two optimization problems (\ref{maxmin}) and (\ref{esnr}) are equivalent.
\end{proposition}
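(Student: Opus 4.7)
The plan is to exploit the simplification that occurs when the uplink relay noise vanishes, i.e., when $\gamma = 0$. Under this hypothesis, the matrix $\bQ$ defined in (\ref{dd}) reduces to $\bI$, so every diagonal element $q_i = 1$. Consequently, by (\ref{def_S}), $s_{ij} = \bQ_{ji}[(\bH^*)^{-1}\bH^{-T}]_{ij} = \delta_{ij}[(\bH^*)^{-1}\bH^{-T}]_{ii}$, so $\bS$ is a positive diagonal matrix with strictly positive diagonal entries $s_{ii} > 0$ (positivity follows because $(\bH^*)^{-1}\bH^{-T}$ is Hermitian positive definite).

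With these simplifications, constraint (\ref{22b}) becomes $|a_i|^2 \geq \sigma^2/\epsilon$ for every $i$, and the power constraint (\ref{22c}) decouples into $\sum_{i=1}^N s_{ii}|a_i|^2 \leq p$. Since $\epsilon$ is small exactly when $\min_i |a_i|^2$ is large, the maxmin problem becomes equivalent to
\[
\max_{\ba}\ \min_i |a_i|^2 \quad \text{s.t.} \quad \sum_i s_{ii}|a_i|^2 \leq p,
\]
with $\epsilon^\star = \sigma^2/\max_{\ba}\min_i|a_i|^2$.

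Next I would use a perturbation argument to show that at the optimum all $|a_i|^2$ must coincide. Suppose $\tilde\ba$ is optimal for (\ref{maxmin}) but $|\tilde a_k|^2 > m \triangleq \min_j|\tilde a_j|^2$ for some index $k$. Shrinking $|\tilde a_k|$ so that $|\tilde a_k|^2 = m$ strictly reduces the left-hand side of the power constraint by $s_{kk}(|\tilde a_k|^2 - m) > 0$, freeing a positive slice of budget. Raising every $|a_i|^2$ uniformly by a sufficiently small $\delta > 0$ stays within budget but strictly increases the minimum from $m$ to $m+\delta$, contradicting optimality of $\tilde\ba$. Hence $|a_i|^2$ is the same constant across all $i$, and Lemma \ref{lem_eq2} pins the common value down to $p/\sum_j s_{jj}$.

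Since all $|a_i|^2$ are equal, the post-processing noise powers $\epsilon_i = \sigma^2/|a_i|^2$ are equal as well, so the optimal $\tilde\ba$ automatically satisfies the equality (\ref{14b}) required by the equal-SNR problem (\ref{esnr}). Combined with the already noted fact that (\ref{esnr}) has a more restrictive feasible set than (\ref{maxmin}) and therefore cannot attain a strictly smaller $\epsilon$, the two optima must coincide. I expect the only subtle step to be the perturbation argument: one has to verify that a uniform increment of all $|a_i|^2$ by $\delta$ respects the loosened power constraint for small enough $\delta$, which is immediate from $\bS$ being diagonal with positive entries so that the incremental cost is the finite positive number $\delta\sum_j s_{jj}$.
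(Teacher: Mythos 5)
Your proof is correct, and it reaches the conclusion by a somewhat different route than the paper. The paper leans on the equivalence already established in Proposition 3.1: it passes to the fixed-$\tilde\epsilon$ QCQP (\ref{maxmin2}), observes that $\gamma=0$ makes $\bQ=\bI$ so that $q_i=1$ and $\bS$ becomes diagonal with positive entries, and then simply notes that minimizing $\sum_i s_{ii}|a_i|^2$ subject to the identical lower bounds forces every constraint tight, $|a_i|^2=\sigma^2/\tilde\epsilon$ for all $i$ (the paper writes $\sigma^2/(\tilde\epsilon-1)$, an apparent typo), hence equal post-processing noise. You instead stay in the original formulation (\ref{maxmin}), recast it as maximizing $\min_i|a_i|^2$ under the decoupled budget $\sum_i s_{ii}|a_i|^2\le p$, and prove equalization by a budget-reallocation perturbation: shrink any coordinate strictly above the minimum, spend the freed power to raise all coordinates uniformly, and contradict optimality. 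Both arguments hinge on the same key fact --- $\gamma=0$ collapses $\bQ$ to $\bI$ and $\bS$ to a positive diagonal matrix --- but your version is self-contained (it does not need the QCQP equivalence) and it makes explicit the step the paper dismisses as ``obvious,'' at the cost of being slightly longer; the paper's version is shorter given the machinery already in place. Your closing step, combining the equalized optimum of (\ref{maxmin}) with the fact that (\ref{esnr}) has the more restrictive feasible set, matches the paper's logic for concluding equivalence, and your identification of the common value $|a_i|^2=p/\sum_j s_{jj}$ via Lemma \ref{lem_eq2} is a correct, if optional, refinement.
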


\begin{proof}
As proved in Proposition \ref{lem_np}, the optimization problem (\ref{maxmin2}) is equivalent to (\ref{maxmin}).
When $\gamma=0$ in (\ref{maxmin2}), $q_i=1$ for all $i=1,\cdots,N$. The objective can be rewritten as $\sum\nolimits_{i=1}^{N} s_{ii}|a_i|^2$. The optimal solution of $\ba$ is obvious, and it satisfies $|a_i|^2=\frac{\sigma^2}{\tilde \epsilon-1}$. That is, for the optimal solution of (\ref{maxmin}), each station has equal post-processing $\tilde \epsilon$.
\end{proof}

Thus, we can use the solution of the equal-SNR problem to approximate that of the maxmin problem in the high SNR regime. The gap will be evaluated numerically in Section V.

\medskip\noindent{\bf\emph{\underline{Optimal Solution in the Case of Two Stations}}}

In order to minimize the post-processing noise power $\epsilon$, we should try to maximize $|a_1|$ and $|a_2|$. Given any $\epsilon$, the amplitudes $|a_1|$ and $|a_2|$ can be calculated by the equal noise power constraint. Then we should find their optimal phases to minimize the relay power consumption.

If $N=2$, the power constraint of the relay in \eqref{14c} can be expanded as
\be   \label{16}
\begin{aligned}
\ba^H \bS \ba
= s_{11}|a_1|^2+s_{22}|a_2|^2+s_{21} a_1 a_2^* +s_{12} a_1^* a_2,
\end{aligned}
\ee
where $s_{11}\ge0$, $s_{22}\ge0$, and $s_{12}=s_{21}^*$. By the definition of $\bS$ in \eqref{def_S}, $s_{12}=s_{21}\ge 0$. Formula (\ref{16}) can be written as
\be  \label{26}
\ba^H \bS \ba
= \left|\sqrt{s_{11}}a_1+\frac{s_{12}}{\sqrt{s_{11}}}a_2\right|^2+ \left(s_{22}-\frac{s_{12}^2}{s_{11}}\right) |a_2|^2.
\ee
Since in \eqref{26} only $\left|\sqrt{s_{11}}a_1+\frac{s_{12}}{\sqrt{s_{11}}}a_2\right|$ is related to the phases of variables $a_1$ and $a_2$, the global minimum is achieved by real-valued $a_1$ and $a_2$ with opposite signs.  Without loss of optimality, assume $a_1\ge 0$ and $a_2\leq 0$. Then the power consumption constraint can be simplified as
\be
\ba^T \bS \ba
 =  s_{11}a_1^2+2s_{12}a_1a_2+s_{22}a_2^2.
\ee
According to the equal noise power constraint \eqref{14b}, we have
\be  \label{57}
a_2 = -\frac{\sigma}{\sqrt{q_1-q_2+\frac{\sigma^2}{a_1^2}}}.
\ee
Plugging (\ref{57}) into the power constraint, we have a biquartic equation (\ref{65}),
%\be  \label{65}
%\begin{aligned}
%& s_{11}^2q_\delta^2 a_1^8   +2q_\delta [s_{11}^2 \sigma^2 + s_{11} s_{22}\sigma^2-
%  s_{11}p   q_\delta - 2s_{12}^2\sigma^2] a_1^6+ \\
%  & [\sigma^4 (s_{11}+s_{22})^2 +
%   p^2    q_\delta^2  -  2p\sigma^2q_\delta(2s_{11}+s_{22}) -4 s_{12}^2\sigma^4]  \\
%   & a_1^4 +  [2p^2\sigma^2q_\delta-2p\sigma^4(s_{11}+s_{22})] a_1^2 + p^2\sigma^4  =  0,
%\end{aligned}
%\ee
\be  \label{65}
\begin{aligned}
0=& s_{11}^2q_\delta^2 a_1^8   +2q_\delta [s_{11}^2 \sigma^2 + s_{11} s_{22}\sigma^2-
  s_{11}p   q_\delta \\
  &  - 2s_{12}^2\sigma^2] a_1^6 +[\sigma^4 (s_{11}+s_{22})^2 +
   p^2    q_\delta^2 \\
&    -  2p\sigma^2q_\delta(2s_{11}+s_{22}) -4 s_{12}^2\sigma^4] a_1^4 \\
   &  +  [2p^2\sigma^2q_\delta-2p\sigma^4(s_{11}+s_{22})] a_1^2 + p^2\sigma^4,
\end{aligned}
\ee
%\begin{figure*}
%\be  \label{65}
%\begin{aligned}
% s_{11}^2q_\delta^2 a_1^8 &  +2q_\delta [s_{11}^2 \sigma^2 + s_{11} s_{22}\sigma^2-
%  s_{11}p   q_\delta - 2s_{12}^2\sigma^2] a_1^6+  [\sigma^4 (s_{11}+s_{22})^2 + \\
%   p^2    q_\delta^2 & -  2p\sigma^2q_\delta(2s_{11}+s_{22}) -4 s_{12}^2\sigma^4]  a_1^4  +  [2p^2\sigma^2q_\delta-2p\sigma^4(s_{11}+s_{22})\ a_1^2 + p^2\sigma^4  =  0,
%\end{aligned}
%\ee
%\end{figure*}
where $q_\delta=q_1-q_2$.
Since the equal-SNR problem is feasible, there exists solutions for (\ref{65}). In order to maximize $a_1$ and $a_2$, the largest real root of (\ref{65}) is the optimal solution of $a_1$, which admits an analytical solution \cite{wiki}. Consequently, $a_2$ can be calculated by (\ref{57}).

Alternatively, after we deduce that $a_1$ and $a_2$ have opposite signs, the one-dimensional search in Section III.A can be used to solve the problem as well. Thus, we have solved the equal-SNR problem of (\ref{esnr}) in the case of $N=2$.

%\begin{theorem} \label{the_counter}
%When $N=2$, for the equal-SNR problem (\ref{esnr}), the optimal solutions of $\ba$ have a counter-phase property, i.e., $\angle a_1=\angle a_2 +\pi$, while $a_1$ or $a_2$ by itself can have any phase other than the counter-phase relationship between $a_1$ and $a_2$. One of the optimal solutions of $\ba$ can be written as $\ba=[a_1, a_2]^T$, where $a_1,\in \mathbb{R}^+,\ a_2\in\mathbb{R}^-$, i.e., the two diagonal elements have counter phases.
%\end{theorem}

It is not difficult to see that making the two user signals have opposite signs (phases) minimizes the relay power consumption. The optimal solution of the maxmin problem when $N=2$ also has the opposite-phase property.
As we shall see, in the case of more than two users with users forming pairs, assigning each pair of users opposite phases is an effective scheme. Thus, in general the opposite-phase setting is effective for pairwise transmission in any problem which needs to minimize the relay power consumption, including the preceding maxmin problem.

\medskip\noindent{\bf\emph{\underline{Suboptimal Solution in the Case of $N>2$}}}

If there are more than two stations, we propose a suboptimal solution to \eqref{esnr}. The vector $\ba$ consists of arbitrary complex numbers. Given $\epsilon$, one can obtain $|a_j|$ from \eqref{14b}. Let the phases $\theta_j=\angle a_j,\ j=1,\cdots,N$, be fixed. We solve \eqref{esnr} to obtain the minimum noise variance $\epsilon(\theta_1,\cdots,\theta_N)$, which satisfies $\ba^H\bS\ba=p$.

We now consider the general case of complex-valued $\ba$. With the numerical method of one-dimensional search, the optimization of $a_j$'s amplitudes and phases can be decoupled. There exists an $\epsilon$ such that $\ba^H\bS\ba=p$. Denote such an $\epsilon$ by $\epsilon(\theta_1,\cdots,\theta_N)$ since in each step we regard the amplitudes as constant values. It then suffices to solve for
\begin{equation} \label{minsigma}
\epsilon^{*}=\arg\min\limits_{\theta_1,\cdots,\theta_N} \epsilon(\theta_1,\cdots,\theta_N).
\end{equation}

In the following we provide two suboptimal algorithms for (\ref{esnr}) via \eqref{minsigma}.

\medskip\noindent{\bf\emph{Non-PNC Random-phase Algorithm}}: In (\ref{minsigma}), we note that $\epsilon$ is a complicated nonlinear function of $\theta_j$. A time-consuming exhaustive search can be used to find the solution to (\ref{minsigma}). To reduce the computation time, we find the best set of phases over a randomly generated candidates in lieu of an exhaustive search. We call this the \emph{random-phase algorithm}. We divide the interval of $[0,2\pi)$ equally into $M$ bins with the values of $0,\frac{2\pi}{M},\cdots,\frac{2(M-1)\pi}{M}$ respectively, and we randomly pick among them to set the the value of $\theta_j$ for each and every $j=1,\cdots,N$. After that, we compute the corresponding $\epsilon(\theta_1,\cdots, \theta_N)$ by solving (\ref{14b}) and (\ref{14c}). Given an $\epsilon$, substituting it into (\ref{14b}) yields $|a_j|$ for all $j$. We perform $L$ trials of these random phase assignments to obtain $L$ phase vectors of $(\theta_1,\cdots, \theta_N)$. Calculating the relay power consumption by \eqref{14c}, we choose the phase vector using the least power among the $L$ candidates as our approximated phase solution. By one-dimensional search, the estimated $\epsilon^*$ can be achieved when the least power consumption of the relay is $p$. Accordingly, we can calculate $|a_j|$ for all $j$ with the estimated $\epsilon^*$, then $\ba$ with the approximated phase solution. Hence, $\bG$ can be calculated by its definition \eqref{10}.
This best-out-of-$L$-trials feasible solution is in general larger than the actual optimal $\epsilon^*$. In Section V, we will show that large gains can be achieved with only small $M$ and $L$. Moreover, increasing $M$ and $L$ further yields very little improvement, suggesting that the estimated $\epsilon^*$ with small $M$ and $L$ is close to the result achieved by an extensive search.

\medskip\noindent{\bf\emph{Non-PNC Opposite-phase Algorithm}}: Recall that in the case of $N=2$ the solution has $a_1$ and $a_2$ with opposite signs, or in general the minimum power consumption is achieved with $a_1$ and $a_2$ being complex numbers with opposite phases, i.e., $\theta_1 = \theta_2 + \pi$. For large even number of $N$, consider the situation in which the transmissions are pairwise. The stations form pairs, and two stations in a pair exchange data with each other only. Assume that station pair $\ell$ consists of stations $\pi(\ell)$ and $\kappa(\ell)$. Define
%\be  \label{hh}
%\begin{aligned}
%&\begin{bmatrix}
%  h_{\ell1}   & h_{\ell2}  \\
%   h_{\ell2}^* & h_{\ell3}  \\
%\end{bmatrix}
%\\&\triangleq
%\begin{bmatrix}
%[(\bH^{H}\bH)^{-1}]_{\pi(\ell),\pi(\ell)}   &   [(\bH^{H}\bH)^{-1}]_{\pi(\ell),\kappa(\ell)}  \\
%   [(\bH^{H}\bH)^{-1}]_{\kappa(\ell),\pi(\ell)}   &   [(\bH^{H}\bH)^{-1}]_{\kappa(\ell),\kappa(\ell)}  \\
%\end{bmatrix}
%\end{aligned}
%\ee
\be  \label{hh}
\begin{bmatrix}
  h_{\ell1}   & \kern-.9em h_{\ell2}  \\
   h_{\ell2}^* &\kern-.9em  h_{\ell3}  \\
\end{bmatrix}
\triangleq
\begin{bmatrix}
[(\bH^{H}\bH)^{-1}]_{\pi(\ell),\pi(\ell)}   & \kern-.9em   [(\bH^{H}\bH)^{-1}]_{\pi(\ell),\kappa(\ell)}  \\
   [(\bH^{H}\bH)^{-1}]_{\kappa(\ell),\pi(\ell)}   & \kern-.9em   [(\bH^{H}\bH)^{-1}]_{\kappa(\ell),\kappa(\ell)}  \\
\end{bmatrix}
\ee
where $h_{\ell1}\ge 0$, $h_{\ell3}\ge 0$, $h_{\ell2}\in \mathbb{C}$. The post-processing noise power can be factorized in terms of $b_{\ell1}$ and $b_{\ell2}$ as follows:
\begin{align}
\epsilon =& \gamma^2 h_{\ell3} + \frac{\sigma^2}{|a_{\ell1}|^2}\\
=&\gamma^2 h_{\ell1} + \frac{\sigma^2}{|a_{\ell2}|^2}. \label{82}
\end{align}
The relay power consumption can be written as
\be \label{omega1}
\begin{aligned}
  \begin{split}
 \Omega_1 (\ba) = & \gamma^2\mathcal{O}_1(\ba)+\sum\limits_{\ell=1}^{N/2}\left\{\left|a_{\ell1} + \gamma^2\left|h_{\ell2}\right|^2  a_{\ell2} \right|^2 \right.\\
 & + \left(h_{\ell1}+ \gamma^2 h_{\ell1} h_{\ell3} -1 \right) |a_{\ell1}|^2 \\
  &\left.+ \left(h_{\ell3}+ \gamma^2 h_{\ell1} h_{\ell3} - \gamma^4|h_{\ell2}|^4 \right) |a_{\ell2}|^2\right\},
 \end{split}
\end{aligned}
\ee
where the summation consists of the inner-pair quadratic items of $\ba$, and
$\mathcal{O}_1(\ba)$ denotes the sum of the quadratic items of $\ba$ across pairs, e.g., $a_ia_j$ where $i,\ j$ from different pairs. For fixed amplitudes of $a_{\ell1}$ and $a_{\ell2}$, it is obvious that the phase vector $(\theta_1,\cdots,\theta_N)$ that minimizes \eqref{omega1} is the optimal phase vector of \eqref{esnr}. We could let two stations of a pair to have opposite phases to lower the inner-pair part of \eqref{omega1}. Simulation results indicate that as long as the relative phase is $\pi$ within a pair, the throughput performance remains essentially the same regardless of the phase differences between different pairs. Thus, to simplify the problem we use real numbers for the elements of $\ba$. The amplitudes are calculated in the same way as that of the non-PNC random-phase algorithm.

\begin{remark}
The opposite-phase solution approaches the optimal solution of the equal-SNR problem as SNR increases, since $\gamma^2\mathcal{O}_1(\ba)$ becomes negligible. Thus, the opposite-phase solution approaches the optimal solution of the maxmin problem as well according to Proposition $3$.
\end{remark}

\subsection{Network-Coded Relaying}

The optimization problem is formulated as
\begin{subequations} \label{netcod2}
\begin{align}
\min \limits_{\boldsymbol{A},\boldsymbol{B}}  &\qquad \epsilon  \label{282a}\\
\text{s.t.} &\qquad  \bR_{ii}-1+\frac{\sigma^2}{|a_i|^2} = \epsilon,\ i=1,\cdots,N, \label{282b}\\
 &\qquad \Omega(\bA,\bB)\leq p, \label{282c}\\
 &\qquad \epsilon \ge 0.
\end{align}
\end{subequations}
The only difference between the equal-SNR problem and the maxmin problem \eqref{netcod} is that the inequality \eqref{28b} is replaced by the equality \eqref{282b}. A proposition for the network-coded problem can be proved in the same way as Proposition $3$. We then use the solution of the equal-SNR problem \eqref{netcod2} to approximate that of the maxmin problem \eqref{netcod} at high SNR. The gap will be also evaluated in Section V.

%As in Proposition \ref{lem_nogap}, if no noise is introduced at the switch, then the optimal solution of the maxmin problem also has exactly the same post-processing noise power. We can use the solution of the equal-SNR problem to approximate that of the maxmin problem.
%In the following, we investigate how to solve this equal-SNR problem.
%
%
%
%In order to solve the equal-SNR problem, we need to optimize both $\bA$ and $\bB$. That is, given $\bH,\ p,\ \sigma^2,\ \gamma^2$, and a desired permutation $\bP$, design $\bG$ to minimize the post-processing noise power, so as to maximize the throughput of the network. Two suboptimal schemes are described as follows.

%\subsubsection{Heuristic Algorithm}

\medskip\noindent{\bf\emph{PNC Identical-$b$ Random-phase Algorithm}} simplifies the matter by introducing an extra constraint:  $\bB = b\bI$, where $b$ is a real scalar. We set a range for searching $b$. For each trial of $b$, we use essentially the same method as the non-PNC random-phase algorithm proposed in Section IV.A to find the phases $\theta_j$ of $a_j$ and $\varphi_j$ of $b_j$, and find the corresponding estimated $\epsilon^{*}$ as well. We then find $b$ that yields the least post-processing noise power. The beamformer $\bG$ can be calculated from $\bP$, $|a_j|,\ b,\ \theta_j$ and $\varphi_j$.

\medskip\noindent{\bf\emph{PNC Phase-aligned Algorithm}}:
Consider pairwise transmission, i.e., the case of $N$ being an even number. Define
\be  \label{hh}
\bB_\ell\triangleq\left[ {\begin{array}{*{20}c}
   b_{\ell1} & 0  \\
   0   & b_{\ell2}  \\
\end{array}} \right],
\ee
where $b_{\ell1}$ and $b_{\ell2}$ are the diagonal elements $\pi(\ell)$ and $\kappa(\ell)$ of $\bB_\ell$, respectively. The post-processing noise power can be factorized in terms of $b_{\ell1}$ and $b_{\ell2}$ as follows:
\begin{align}
\epsilon &=\gamma^2 h_{\ell1} \left|b_{\ell1}+\frac{h_{\ell2}^*}{h_{\ell1}}\right|^2 - \frac{\gamma^2 |h_{\ell2}|^2}{h_{\ell1}} + \gamma^2 h_{\ell3} + \frac{\sigma^2}{|a_{\ell1}|^2}\\
 &=\gamma^2 h_{\ell3} \left|b_{\ell2}+\frac{h_{\ell2}}{h_{\ell3}}\right|^2  - \frac{\gamma^2 |h_{\ell2}|^2}{h_{\ell3}} + \gamma^2 h_{\ell1} + \frac{\sigma^2}{|a_{\ell2}|^2}. \label{82}
\end{align}
This formula shows the potential advantage of introducing $b_{\ell1}$ and $b_{\ell2}$, i.e., the application of physical-layer network coding.
If we set $b_{\ell1}$ and $b_{\ell2}$ as
\be \label{85}
b_{\ell1} = -\frac{h_{\ell2}^*}{h_{\ell1}},\quad b_{\ell2} = -\frac{h_{\ell2}}{h_{\ell3}},
\ee
the post-processing noise power can be minimized in terms of $b_{\ell1}$ and $b_{\ell2}$. The relay power consumption can be written as
\be \label{86}
\begin{aligned}
  \begin{split}
 &\Omega_2(\bA)
  = \gamma^2\mathcal{O}_2(\bA) + \sum\limits_{\ell=1}^{N/2} \left\{ \left|a_{\ell1} + \lambda_\ell  a_{\ell2} \right|^2 \right.\\
  &+\kern-.3em \left(h_{\ell1}+\frac{|h_{\ell2}|^2}{h_{\ell1}} - \gamma^2 |h_{\ell2}|^2
    + \gamma^2 h_{\ell1} h_{\ell3} -1 \right) \kern-.2em |a_{\ell1}|^2 \\
  & \left. + \kern-.3em \left(h_{\ell3}+\frac{|h_{\ell2}|^2}{h_{\ell3}} - \gamma^2 |h_{\ell2}|^2 + \gamma^2 h_{\ell1} h_{\ell3} - \lambda_\ell^2 \right)\kern-.2em |a_{\ell2}|^2 \right\},
 \end{split}
\end{aligned}
\ee
where
\[
\lambda_\ell = |h_{\ell2}|^2 \left[\gamma^2\left(\frac{|h_{\ell2}|^2 }{h_{\ell1} h_{\ell3}} - 1 \right)-\frac{1}{h_{\ell1}}-\frac{1}{h_{\ell3}}\right].
\]
Note that $\mathcal{O}_2(\bA)$ denotes the sum of the quadratic items of $\bA$ across pairs. Since $\bH^{-1}\bH^{-H}$ is positive semidefinite, we can prove $|h_{\ell2}|^2 \leq h_{\ell1} h_{\ell3}$, then $\lambda_\ell<0$.
Thus, for fixed amplitudes of $a_{\ell1}$ and $a_{\ell2}$, the inner-pair part of \eqref{86} can  be lowered when the phases of $a_{\ell1}$ and $a_{\ell2}$ are aligned, e.g., $a_{\ell1},\ a_{\ell2}\ge 0$. Simulation results indicate that as long as the phases of $\bA$'s diagonal elements are aligned within pairs, the throughput performance is not sensitive to the phase differences among different pairs.

\begin{remark}
Note that the phase-aligned solution is the optimal solution of the maxmin problem \eqref{netcod} when the noise level at the relay is zero. The result is obvious after setting $\gamma$ to $0$ in \eqref{netcod}.
\end{remark}

In summary, Remarks $1$ and $2$ show that when the noise at the relay is small, the opposite-phase solution approaches optimal for non-PNC case  and the phase-aligned solution approaches optimal for the PNC case. Our simulation results validate this conclusion.

\section{Numerical Results}
In this section, we evaluate the throughputs of various designs of MIMO switches. We assume the maximum transmit power of the relay and every station are the same (thus $p=1$), and the noise level at the relay and the stations is the same.
Our simulation indicates that the system throughputs are roughly the same with different symmetric permutations. The same result can be concluded for asymmetric permutations. Thus, we use one permutation for each of them ($\bP_1$ and $\bP_2$ are given by the matrices described in (\ref{ps}) and (\ref{pa}) with the diagonal elements set to $0$) in simulations.

\medskip\noindent{\bf\emph{Observation 1}}: For equal-SNR zero-forcing (i.e., non-PNC) relaying, the optimal setting for the case of two stations ($N=2$) has the property that the two elements of $\ba$ have opposite signs. In general, the non-PNC opposite-phase algorithm is an effective scheme for pairwise transmission with larger even number of stations.

A similar framework as ours is investigated in \cite{amah09}, which focuses on optimizing the sum rate of all stations. Therein, a suboptimal beamforming scheme is proposed, which also uses zero-forcing detection and zero-forcing precoding, and simply uses a positive scalar weight to control the relay power consumption instead of our diagonal $\bA$. We regard this scheme as a benchmark and call it ``the basic scheme.'' All schemes proposed in this paper have an advantage over the basic scheme in that they guarantees fairness.

Compared with the basic scheme in Fig.~\ref{2u}, the optimal setting of $N=2$, i.e., the non-PNC opposite-phase algorithm proposed in Section IV.A achieves more than $0.6$ dB gain in the low SNR regime. The gain becomes smaller as the SNR increases, e.g., around $0.25$ dB gain at the SNR of $15$ dB. We explain why the gain diminishes for high SNR as follows. When the relay noise power is zero, $|a_1|=|a_2|$. In this case, $s_{12}$ becomes $0$ in (\ref{26}). Then the throughput performance does not depend on the phase difference of $a_1$ and $a_2$.
The opposite-sign setting is equivalent to the identical-gain setting, i.e., the basic scheme. Thus, the gain over the basic scheme becomes trivial in the high-SNR regime.

Fig.~\ref{4u} presents the throughput in the case where $N=4$ stations form two pairs for pairwise transmission. The throughputs are roughly the same when we vary the phase differences of the pairs while keeping the phase difference within each pair to $\pi$. (Experimentation with the phases is not shown in Fig.~\ref{4u} to avoid cluttering.) With this result, we could set the elements of $\ba$ such that one element in each pair is positive and the other element is negative. The results are similar to that of $N=2$. The non-PNC opposite-phase algorithm achieves $0.8$ dB gain in the low SNR regime and $0.25$ dB gain over the basic scheme in the high SNR regime.

\begin{figure}
\centering
\includegraphics[width=3.5in]{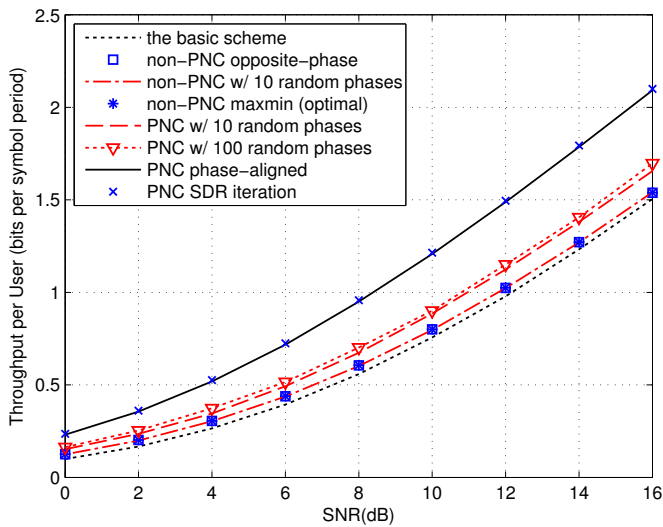}
\caption{Throughput comparison of different relaying schemes in the case of two stations.}
\label{2u}
\end{figure}

\medskip\noindent{\bf\emph{Observation 2}}: For equal-SNR relaying, physical-layer network coding can be applied as a relaying method to improve throughput performance significantly. For pairwise transmission, the PNC phase-aligned algorithm achieves significant gains over any other non-PNC scheme of zero-forcing relaying.

We present simulation results on the PNC phase-aligned scheme in Fig.~\ref{2u} and Fig.~\ref{4u}. When we apply physical-layer network coding in our MIMO switching, significant gains can be achieved over other schemes. The proposed PNC phase-aligned scheme outperforms all the other non-PNC schemes. Note in particular that compared with the basic scheme, it does not involve complicated calculations during the one-dimensional search of $\bA$ and the setting of $\bB$. However, the PNC phase-aligned scheme can not be applied to non-pairwise transmissions.

\medskip\noindent{\bf\emph{Observation 3}}: For equal-SNR zero-forcing relaying with the non-PNC random-phase  algorithm, the simulation results indicate that large gains can be achieved with a small number of phase bins and trials. For network-coded relaying, the PNC identical-$b$ random-phase algorithm can be applied for both pairwise and non-pairwise transmissions to achieve the network coding gain. It is worth mentioning that network coding helps not only for the traditional pairwise switching pattern but also for the non-pairwise pattern.

In Figs.~\ref{2u} and \ref{4u}, when $M=8$ and $L=10$, the non-PNC random-phase scheme proposed in Section IV.A can achieve good enough throughput performance. With the PNC random-phase algorithm proposed in Section IV.B, the throughput performance is even better than that of the best non-PNC scheme. For $M=8$, it achieves around $1.2$ dB gain with $L=10$ and $1.4$ dB with $L=100$ compared to the basic scheme in the case of two stations; it can achieve around $1.5$ dB and $2$ dB gains with $10$ and $100$ trials, respectively,  over the basic scheme in the case of four stations. However, it needs larger $M$, $L$ and more random trials to perform as good as the PNC phase-aligned scheme.

Consider the non-pairwise transmission in Fig.~\ref{4u_np}. Note that the non-PNC opposite-phase scheme and the PNC phase-aligned scheme can not be applied to non-pairwise transmissions. The non-PNC random-phase scheme and the PNC random-phase scheme can still outperform the basic scheme. For $M=8$, when $L=10$ the non-PNC random-phase scheme could achieve around $0.15$ dB gain. The PNC random-phase scheme could achieve around $1.1$ dB and $1.3$ dB gains over the basic scheme with $10$ and $100$ trials, respectively.

\begin{figure}
\centering
\includegraphics[width=3.5in]{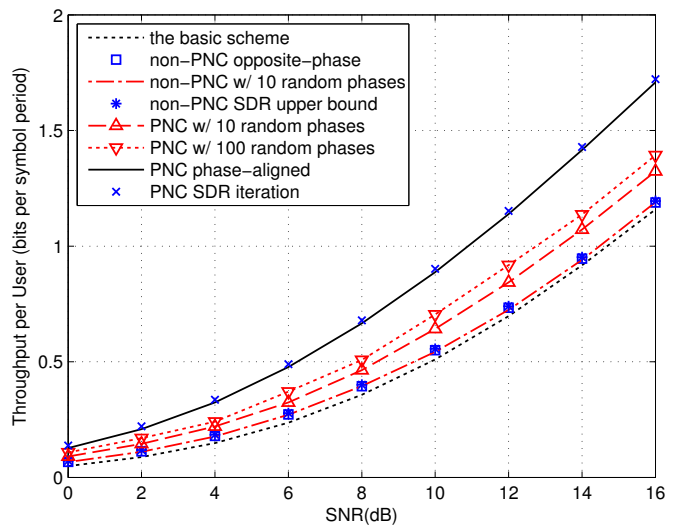}
\caption{Throughput comparison of different relaying schemes for pairwise switching pattern in the case of four stations.}
\label{4u}
\end{figure}

\medskip\noindent{\bf\emph{Observation 4}}: In general, for zero-forcing (i.e., non-PNC) relaying, the throughput of the maxmin problem is larger than that of the equal-SNR problem. However, the throughput gap between the equal-SNR problem and the maxmin problem is small over a wide range of SNR.

We first evaluate the throughput performances with two stations. An interesting question is how large is the throughput gap between (\ref{maxmin}) and (\ref{esnr}) corresponding to their optimal noise powers. In Fig.~\ref{2u}, the throughput gap between the two curves is very small. Note that the optimal solutions of the maxmin problem are found by exhaustive search. In Table I, the gap is evaluated over a wide range of SNR. The results indicate that the gap is less than $0.1\%$ for the SNR regime from $0$ to $30$ dB.

In the case of four stations, the exhaustive search for solving the maxmin problem becomes computationally expensive. Therefore, we use the upper bound of the maxmin problem calculated by the SDR scheme for benchmarking instead. Generally, the throughput gap between the equal-SNR and the upper bound is small, and becomes even smaller in the high SNR regime in Fig.~\ref{4u} and Table II. Since we use the upper bound for the maxmin solution for benchmarking, we conclude that the throughput gap between the maxmin problem and the equal-SNR problem is also small.

Therefore, the maxmin problem can be well approximated by the equal-SNR problem. We note that the trend as indicated by the simulations results in Fig.~\ref{2u}, Fig.~\ref{4u}, Table I and Table II is consistent with the analytical result of Proposition \ref{lem_nogap}. In the high SNR regime, the noise becomes negligible, and the gap between equal-SNR and maxmin diminishes.

\begin{figure}
\centering
\includegraphics[width=3.5in]{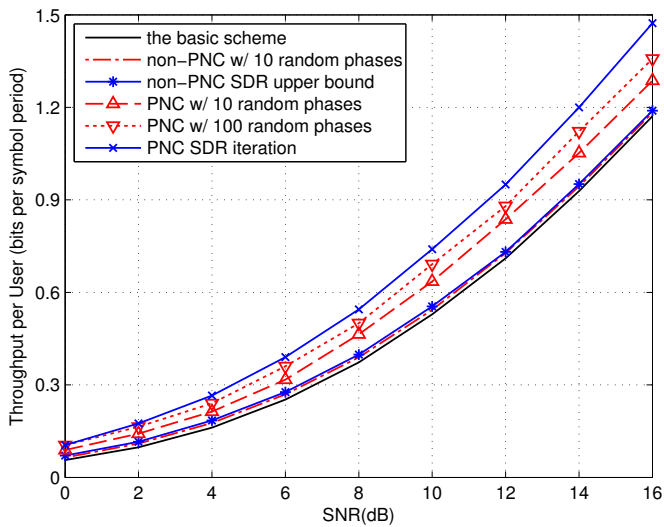}
\caption{Throughput comparison of different relaying schemes for non-pairwise switching pattern in the case of four stations.}
\label{4u_np}
\end{figure}

\begin{table*}
\caption{Worst-station Throughputs of the maxmin problem  (exhaustive search \& SDR) and throughputs of the equal-SNR problem (non-PNC opposite-phase \& PNC phase-aligned algorithms) when $N=2$; $\Delta$ denotes the variational ratio w.r.t. the equal-SNR solution of either non-PNC opposite-phase or PNC phase-aligned algorithms, respectively.} \centering
\begin{tabular}{c  c  c  c c c  c}
\cmidrule[1pt]{1-7}
 & \multicolumn{3}{c }{Non-PNC} &\phantom{abc}& \multicolumn{2}{c}{PNC}\\
\cmidrule[1pt]{2-4} \cmidrule[1pt]{6-7}
SNR & Equal-SNR & \multicolumn{2}{c }{Maxmin} &\phantom{abc}& Equal-SNR & \multicolumn{1}{ c}{Maxmin}\\
\cmidrule[1pt]{2-4} \cmidrule[1pt]{6-7}
(dB) & Opposite-phase & Optimal/$\Delta(\%)$ & SDR/$\Delta(\%)$  &\phantom{abc}& Phase-aligned & SDR/$\Delta(\%)$\\
\cmidrule[1pt]{1-7}
0      &   0.1270  &    0.1271 / 0.08  &    0.1270     / -0.00   &&  0.2305   &   0.2363 / 2.52\\
10     &   0.7990  &    0.7994 / 0.05  &    0.7987     / -0.03   &&  1.2091   &   1.2264 / 1.43\\
20     &   2.1249  &    2.1256 / 0.03  &    2.1202     / -0.22   &&  2.7250   &   2.7549 / 1.10\\
30     &   3.7075  &    3.7083 / 0.02  &    3.6864    /  -0.57   &&  4.3762   &   4.4016 / 0.58\\
\cmidrule[1pt]{1-7}
\end{tabular}
\end{table*}

\begin{table*}
\caption{Worst-station Throughputs of the maxmin problem  (SDR upper bound \& SDR) and throughputs of the equal-SNR problem (non-PNC opposite-phase \& PNC phase-aligned algorithms)  when $N=4$; $\Delta$ denotes the variational ratio w.r.t. the equal-SNR solution of either non-PNC opposite-phase or PNC phase-aligned algorithms, respectively.} \centering
\begin{tabular}{c  c  c  c c c  c}
\cmidrule[1pt]{1-7}
 & \multicolumn{3}{c }{Non-PNC} &\phantom{abc}& \multicolumn{2}{c}{PNC}\\
\cmidrule[1pt]{2-4} \cmidrule[1pt]{6-7}
SNR & Equal-SNR & \multicolumn{2}{c }{Maxmin} &\phantom{abc}& Equal-SNR & \multicolumn{1}{ c}{Maxmin}\\
\cmidrule[1pt]{2-4} \cmidrule[1pt]{6-7}
(dB) & Opposite-phase & SDR-Upper/$\Delta(\%)$ & SDR/$\Delta(\%)$  &\phantom{abc}&  Phase-aligned & SDR/$\Delta(\%)$\\
\cmidrule[1pt]{1-7}
0      &   0.0661  &    0.0720 / 8.85  &   0.0720     / 8.85   &&   0.1266   &   0.1376 / 8.69\\
10     &   0.5504  &    0.5562 / 1.05  &   0.5552     / 0.87   &&   0.8842   &   0.9018 / 1.99\\
20     &   1.7314  &    1.7328 / 0.08  &   1.7302     / -0.08  &&   2.3302   &   2.3435 / 0.57\\
30     &   3.2906  &    3.2918 / 0.04  &   3.2818    /  -0.27  &&   3.9611   &   3.9734 / 0.31\\
\cmidrule[1pt]{1-7}
\end{tabular}
\end{table*}

\medskip\noindent{\bf\emph{Observation 5}}: The throughput of the equal-SNR problem is roughly the same as the approximate throughput of the maxmin problem achieved by the SDR technique for pairwise switching pattern. However, for the non-pairwise pattern, the SDR technique achieves good throughput performance for both non-PNC and PNC relaying schemes.

With reference to Table I and Table II for pairwise switching pattern, for zero-forcing relaying our non-PNC opposite-phase algorithm is better than the SDR scheme in the high SNR regime, and for network-coded relaying the SDR scheme is better than our PNC phase-aligned algorithm. However, the gap is mostly smaller than $2\%$ for SNR larger than $0$ dB.

For the non-pairwise switching pattern, the random-phase scheme is close to the SDR upper bound for the non-PNC relaying. However, for the PNC relaying, the iterative SDR scheme outperforms the PNC identical-$b$ random-phase algorithm since the latter needs more trials to achieve better throughput performance.

The overall implications of our analytical and simulation results are as follows. The equal-SNR scheme, with the target of achieving perfect fairness among the links, is also a good approximation to the maxmin problem when the relay noise is small. Given a symmetric switch matrix that realizes pairwise transmissions, we could use the non-PNC opposite-phase algorithm or the non-PNC random-phase  algorithm for zero-forcing relaying, and the PNC phase-aligned  algorithm or the PNC identical-$b$ random-phase algorithm for network-coded relaying, to identify a suitable gain vector. The PNC phase-aligned algorithm has good throughput performance as well as fast execution time. Given an asymmetric switch matrix that realizes non-pairwise transmissions, we could use the random-phase algorithm for zero-forcing relaying and network-coded relaying to identify a suitable gain vector. The SDR scheme also achieves good throughput performance; however, its complexity is generally higher than the equal-SNR schemes.

\section{Conclusion}

We have proposed a framework for wireless MIMO switching to facilitate communication among multiple wireless stations. With optimized precoders, network-coded relaying improves the throughput performance significantly over non-network-coded relaying.

The maxmin solution and the equal-SNR solution have their respective advantages.
The former yields better throughput performance. However, the equal-SNR solution guarantees perfect fairness. With the equal-SNR solution, MIMO switching can be easily extended to multiple transmissions, by which general transmission patterns can be realized, including unicast, multicast, broadcast, or a mixture of them \cite{wms11}.
Moreover, the maxmin problem is NP-hard, which is solved using exhaustive search. Even for the approximation of the SDR scheme, SDP problems using interior point methods has the complexity cost at most $O(N^7)$ \cite{luo06}. However, the proposed schemes for the equal-SNR problem only has the complexity cost of $O(N^3)$, which is mainly induced by calculating matrix inverses. Hence, the equal-SNR problem is practical for implementation. For the above reasons, the equal-SNR setting is perhaps more amenable to practical deployment.

In future work, it will be interesting to explore switch matrices that realize more complicated patterns than unicast. It would also be interesting to study the case where the number of antennas at the relay is fewer than the number of stations.

\bibliographystyle{IEEEtran}
\bibliography{MIMO_switch}

%% \section*{Acknowledgement}

%\clearpage

\begin{IEEEbiography}[{\includegraphics[width=1in,height=1.25in,clip,keepaspectratio]{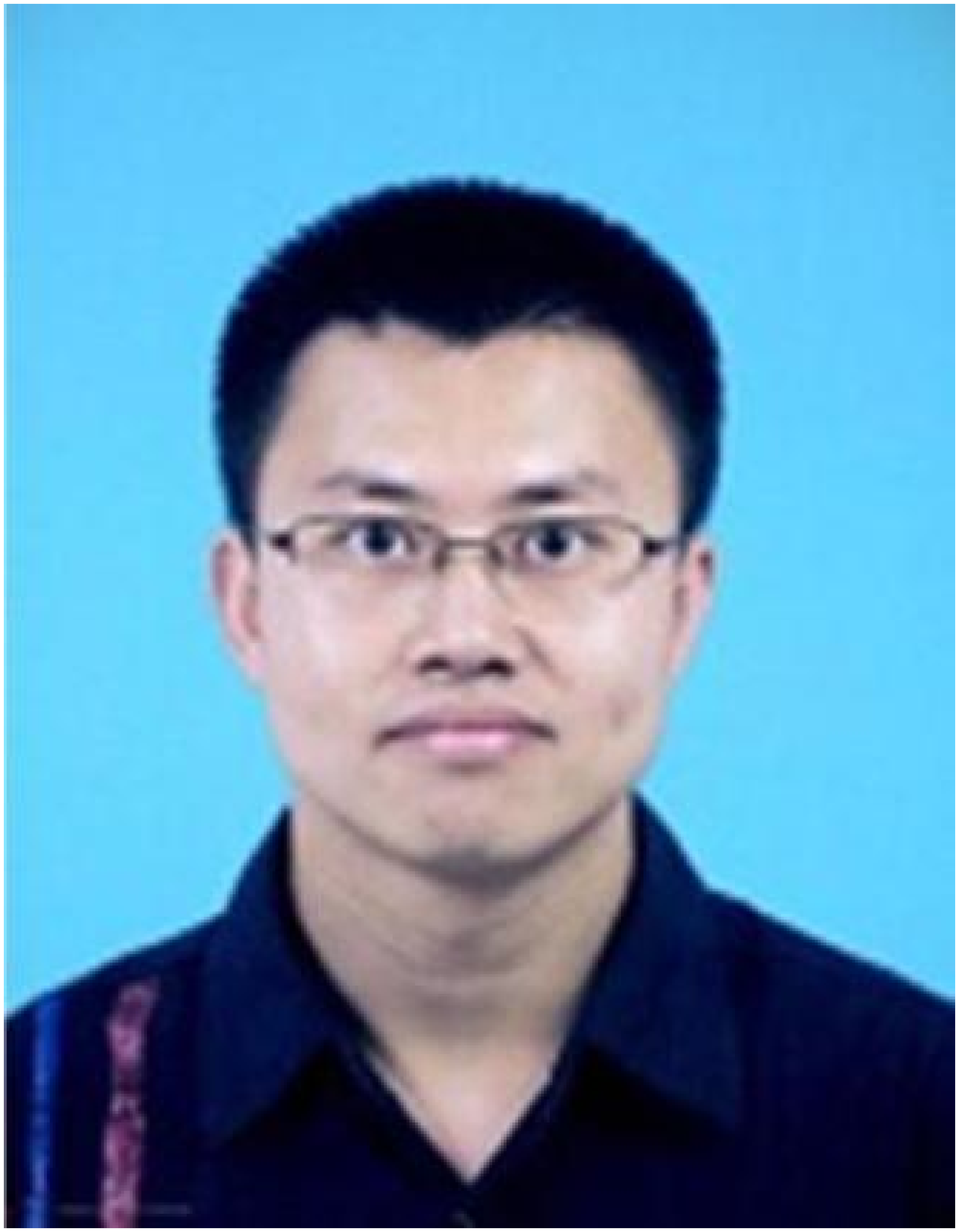}}]
{Fanggang Wang}
  (S'10-M'11) received the B.S. degree in 2005 and the Ph.D. degree in 2010 in the School of Information and Communication Engineering from Beijing University of Posts and Telecommunications, Beijing, China. From 2008 to 2010, he worked as a visiting scholar in Electrical Engineering Department, Columbia University, New York City, New York, USA. Since 2010, he has been working in the State Key Lab of Rail Traffic Control and Safety, School of Electronic and Information Engineering, Beijing Jiaotong University as an Assistant Professor, and also in Institute of Network Coding, The Chinese University of Hong Kong as a Postdoctoral Fellow. In 2011, he visited Department of Electrical Engineering \& Computer Science, Northwestern University, Evanston, IL, USA. His research interests are in the area of MIMO, OFDM and network coding techniques in wireless communications.
  
  He chaired two workshops on wireless network coding (NRN 2011 \& NRN 2012) and served as TPC member in several conferences.
\end{IEEEbiography}

\begin{IEEEbiography}[{\includegraphics[width=1in,height=1.25in,clip,keepaspectratio]{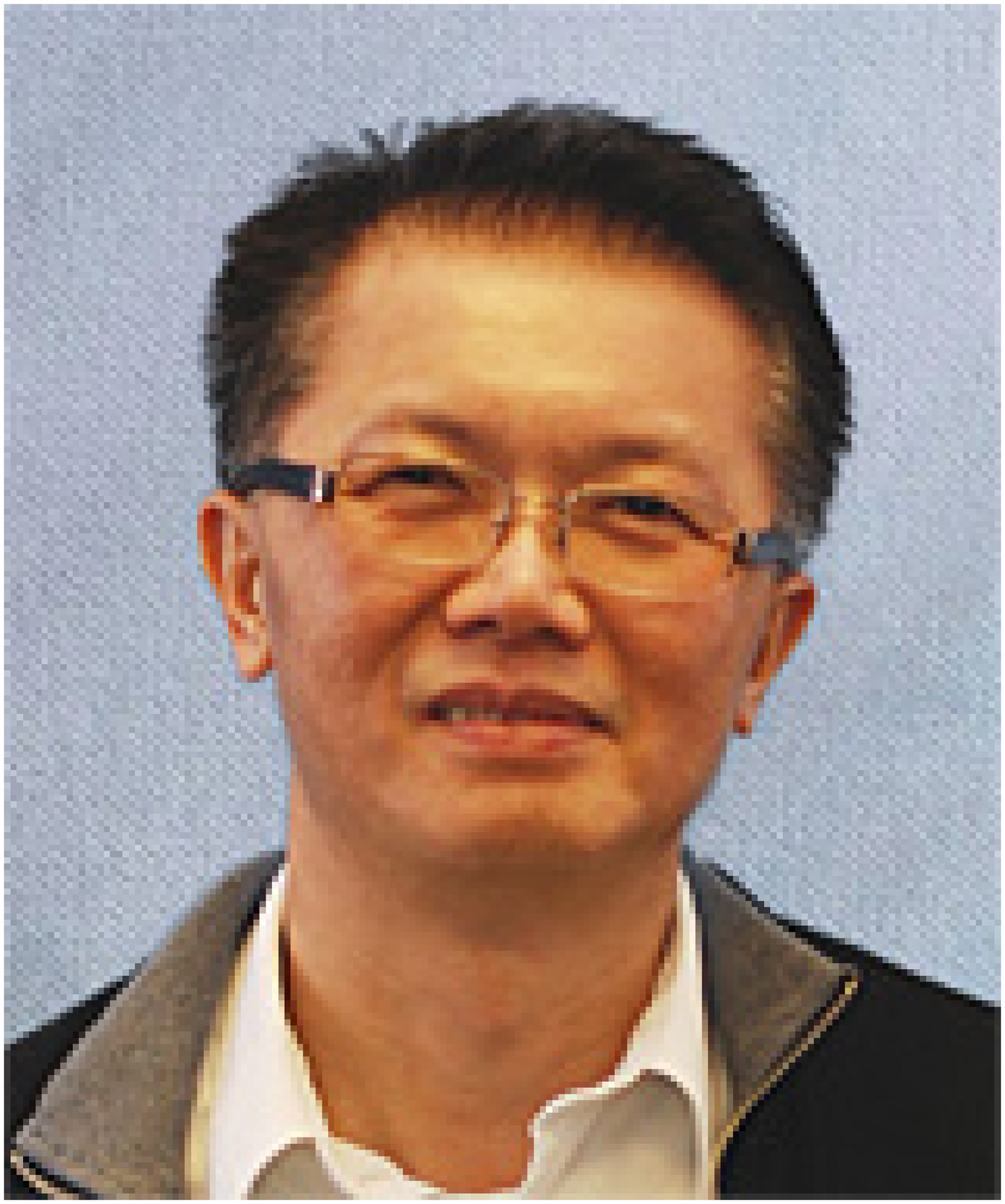}}]
{Soung Chang Liew}
received his S.B., S.M., E.E., and Ph.D. degrees from the Massachusetts Institute of Technology. From 1984 to 1988, he was at the MIT Laboratory for Information and Decision Systems, where he investigated Fiber-Optic Communications Networks. From March 1988 to July 1993, he was at Bellcore (now Telcordia), New Jersey, where he engaged in Broadband Network Research. He has been a Professor at the Department of Information Engineering, the Chinese University of Hong Kong, since 1993.  He is an Adjunct Professor at Peking University and Southeast University, China.

Prof. Liew's current research interests include wireless networks, Internet protocols, multimedia communications, and packet switch design. Prof. Liew's research group won the best paper awards in IEEE MASS 2004 and IEEE WLN 2004. Separately, TCP Veno, a version of TCP to improve its performance over wireless networks proposed by Prof. Liew's research group, has been incorporated into a recent release of Linux OS. In addition, Prof. Liew initiated and built the first inter-university ATM network testbed in Hong Kong in 1993. More recently, Prof. Liew's research group pioneers the concept of Physical-layer Network Coding (PNC).

Besides academic activities, Prof. Liew is also active in the industry. He co-founded two technology start-ups in Internet Software and has been serving as a consultant to many companies and industrial organizations. He is currently consultant for the Hong Kong Applied Science and Technology Research Institute (ASTRI), providing technical advice as well as helping to formulate R\&D directions and strategies in the areas of Wireless Internetworking, Applications, and Services.

Prof. Liew is the holder of eight U.S. patents and a Fellow of IEEE, IET and HKIE. He currently serves as Editor for IEEE Transactions on Wireless Communications and Ad Hoc and Sensor Wireless Networks. He is the recipient of the first Vice-Chancellor Exemplary Teaching Award at the Chinese University of Hong Kong. Publications of Prof. Liew can be found in www.ie.cuhk.edu.hk/soung.
\end{IEEEbiography}

\begin{IEEEbiography}[{\includegraphics[width=1in,height=1.25in,clip,keepaspectratio]{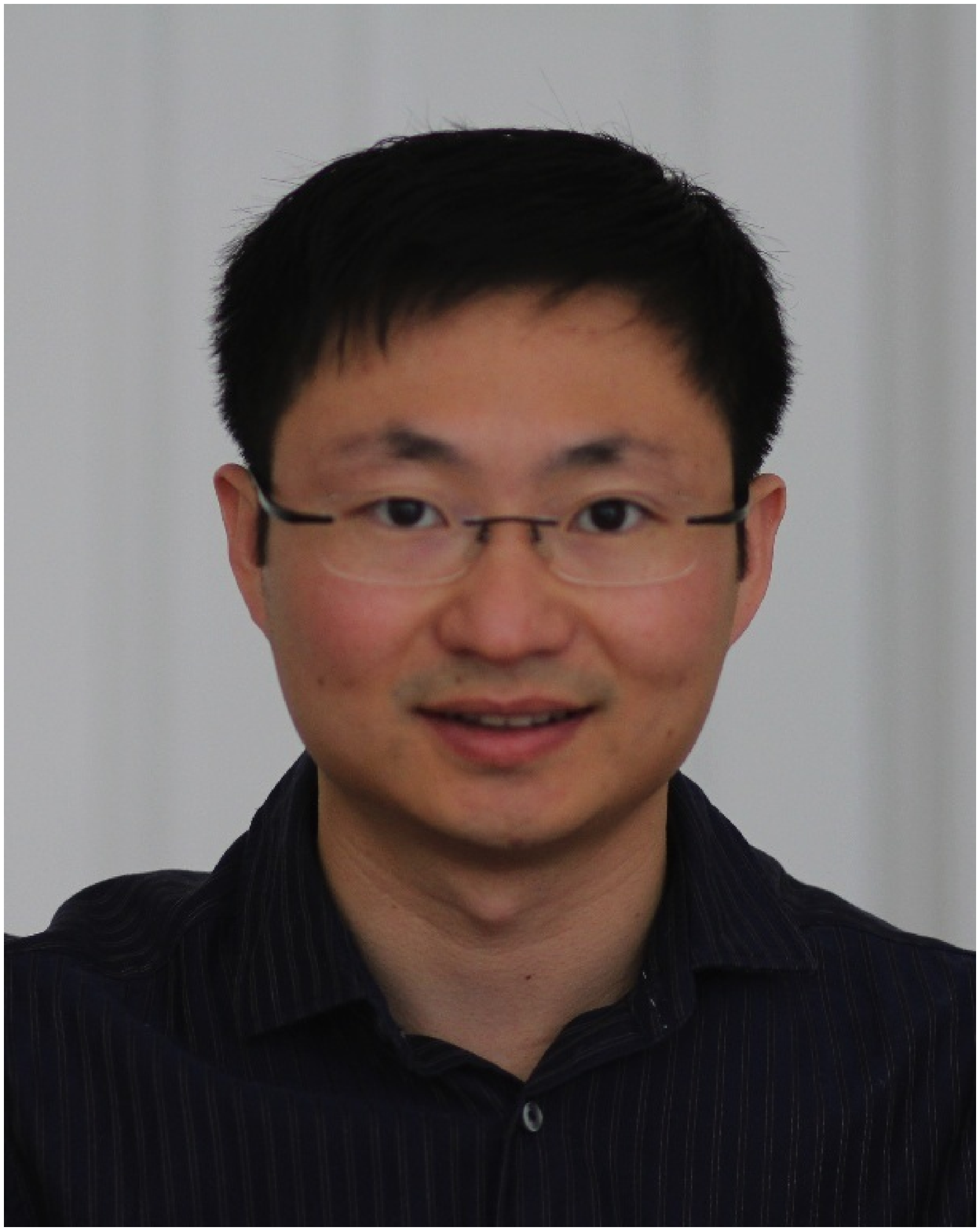}}]{Dongning Guo}
(S'97-M'05-SM'11) joined the faculty of Northwestern
University, Evanston,
IL, in 2004, where he is currently an Associate Professor in the
Department of Electrical Engineering and Computer Science. He
received the B.Eng.\ degree from the University of Science \&
Technology of China, the M.Eng.\ degree from the National University
of Singapore, and the M.A.\ and Ph.D.\ degrees from Princeton
University, Princeton, NJ. He was an R\&D Engineer in the Center for
Wireless Communications (now the Institute for Infocom Research),
Singapore, from 1998 to 1999. He has held visiting positions at
Norwegian University of Science and Technology in summer 2006 and in
the Institute of Network Coding at the Chinese University of Hong
Kong in 2010--2011. He is an Associate Editor of the IEEE Transactions on
Information Theory and an Editor of Foundations and Trends in
Communications and Information Theory.

Dongning Guo received the Huber and Suhner Best Student Paper Award in
the International Zurich Seminar on Broadband Communications in 2000
and is a co-recipient of the IEEE Marconi Prize Paper Award in
Wireless Communications in 2010 (with Y.~Zhu and M.~L.~Honig). He is
also a recipient of the
National Science Foundation Faculty Early Career Development (CAREER)
Award in 2007. His research interests are in information theory,
communications, and networking.
\end{IEEEbiography}

\end{document}